\newcommand{\cA}{\mathcal{A}}
\newcommand{\cB}{\mathcal{B}}
\newcommand{\cC}{\mathcal{C}}
\newcommand{\cE}{\mathcal{E}}
\newcommand{\cF}{\mathcal{F}}
\newcommand{\cH}{\mathcal{H}}
\newcommand{\cI}{\mathcal{I}}
\newcommand{\cJ}{\mathcal{J}}
\newcommand{\cK}{\mathcal{K}}
\newcommand{\cL}{\mathcal{L}}
\newcommand{\cP}{\mathcal{P}}
\newcommand{\cQ}{\mathcal{Q}}
\newcommand{\cS}{\mathcal{S}}
\newcommand{\cT}{\mathcal{T}}
\newcommand{\cW}{\mathcal{W}}
\newcommand{\cX}{\mathcal{X}}
\newcommand{\Id}{\mathbb{I}}
\newcommand{\tr}{\text{Tr}}
\newtheorem{theorem}{Theorem}
\newtheorem{proposition}{Proposition}
\newtheorem{lemma}{Lemma}
\newtheorem{corollary}{Corollary}
\newtheorem{example}{Example}
\newtheorem{remark}{Remark}
\newtheorem{conjecture}{Conjecture}
\begin{document}

\title{Quantifying unsharpness of observables in an outcome-independent way}
\author{Arindam Mitra$^{1,2}$}
\affiliation{$^1$Optics and Quantum Information Group, The Institute of Mathematical Sciences,
C. I. T. Campus, Taramani, Chennai 600113, India.\\
$^2$Homi Bhabha National Institute, Anushakti Nagar, Mumbai 400094, India.}

\date{\today}

\begin{abstract}

Recently, a very beautiful measure of the unsharpness (fuzziness) of the observables is discussed in the paper  [Phys. Rev. A \textbf{104}, 052227 (2021)]. The measure which is defined in this paper is constructed via uncertainty and does not depend on the values of the outcomes. There exist several properties of a set of observables (e.g., incompatibility, non-disturbance) that do not depend on the values of the outcomes. Therefore, the approach in the above-said paper is consistent with the above-mentioned fact and is able to measure the intrinsic unsharpness of the observables. In this work, we also quantify the unsharpness of observables in an outcome-independent way. But our approach is different than the approach of the above-said paper.  In this work, at first, we construct two Luder's instrument-based unsharpness measures and provide the tight upper bounds of those measures. Then we prove the monotonicity of the above-said measures under a class of fuzzifying processes (processes that make the observables more fuzzy). This is consistent with the resource-theoretic framework. Then we relate our approach to the approach of the above-said paper. Next, we try to construct two instrument-independent unsharpness measures. In particular, we define two instrument-independent unsharpness measures and provide the tight upper bounds of those measures and then we derive the condition for the monotonicity of those measures under a class of fuzzifying processes and prove the monotonicity for dichotomic qubit observables.  Then we show that for an unknown measurement, the values of all of these measures can be determined experimentally. Finally, we present the idea of the resource theory of the sharpness of the observables.

\end{abstract}

\maketitle
\section{Introduction}
In quantum mechanics, the observables are mainly of two types-(i) sharp observables and (ii) unsharp observables. Quantifying the unsharpness of observables is an interesting research direction to look at. Few works in this direction have been already done \cite{Busch-disturb, Ozawa-Uncer, Massar-Uncer, Busch-uncer-review, Busch-rms, Carmeli-approximate-repeat, Baek-entropic-uncer, Busch-book, Luo-u}. Recently, in the Ref. \cite{Luo-u}, the unsharpness of observables is quantified using uncertainty. The measure defined in the Ref. \cite{Luo-u}, is outcome-independent.

In this work, we also quantify the unsharpness of the observables in an outcome-independent way. But our approach is different than the approach of the Ref. \cite{Luo-u}. We first define two Luder's instrument-based measures. Then we discuss the different properties of these measures. Then, we try to construct two instrument-independent unsharpness measures. We provide a conjecture and if that can be proven, those instrument-independent measures will be consistent with the resource-theoretic framework for qubit observables. Then we discuss that the values of all of these measures can be determined experimentally. Then we make the justification for taking sharpness as a resource and also present the idea of the resource theory which can be completed in the future.

The rest of this paper is organized as follows. In Sec. \ref{sec:prelim}, we discuss the preliminaries.  From Sec. \ref{sec:L} we start discussing our main results. In particular, in Sec. \ref{subsec:L-c-b}, we construct two Luder's instrument-based unsharpness measures and provide the tight upper bounds of those measures. In Sec. \ref{subsec:el-mf}, we prove the monotonicity of the above-said measures under a class of fuzzifying processes. In Sec. \ref{subsec:F-EL}, we relate our approach to the approach of the Ref. \cite{Luo-u}. In the Sec. \ref{sec:E}, we try to construct two instrument-independent unsharpness measures. In particular, in Sec. \ref{subsec:E-c-b}, we define two instrument-independent unsharpness measures and provide the tight upper bounds of those measures. In Sec. \ref{subsec:e-mf}, we derive the condition for the monotonicity of those measures under a class of fuzzifying processes and prove the monotonicity for dichotomic qubit observables. In Sec. \ref{sec:ex}, we show that for an unknown measurement, the values of all of these measures can be determined experimentally. In Sec. \ref{sec:resource}, we present the idea of the resource theory of the sharpness of the observables. Finally, in Sec. \ref{sec:co}, we summarize our results and discuss the future outlook.

\section{Preliminaries}\label{sec:prelim}
In this section, we discuss the preliminaries.
\subsection{Observables}
An observable (positive operator-valued measure or POVM) $\cA$ acting on the Hilbert space $\cH$ is defined as a set of positive Hermitian matrices i.e., $\cA=\{A_i\}^n_{i=1}$ such that $\sum_iA_i=\Id_{d\times d}$ where $d$ is the dimension of the Hilbert space $\cH$ \cite{Nielsen-book, Teiko-book, Wilde-book}. The set $\{1,....,n\}$ is called outcome set of $\cA$ and is denoted by $\Omega_{\cA}$. Clearly $A_i\in\cL^+(\cH)$  and $\Id_{d\times d}\geq A_i \geq 0$ for all $i\in \Omega_{\cA}$ where $\cL^+(\cH)$ is the set of positive bounded linear operators acting on the Hilbert space $\cH$. Therefore, $A_i^2\leq A_i$ for all $i\in \Omega_{\cA}$.  If  $A^2_i=A_i$ holds for all $i\in\Omega_{\cA}$, we call $A$ as a projection-valued measure (PVM). PVMs are the sharp observable and clearly PVMs are the special cases of POVMs. Clearly, one outcome trivial sharp observable is $\Id_{d\times d}$. If there exist at least one $j\in\Omega_{\cA}$ such that $A^2_j<A_j$ then the observable $\cA$ is not a PVM. This type of observables are unsharp observables \cite{Luo-u}. 
\subsection{Quantum Channels}
A quantum channel $\Gamma:\cS(\cH)\rightarrow\cS(\cK)$ is a  completely positive trace preserving (CPTP) map where $\cS(\cH)$ is the state space (i.e., the set of density matrices on the Hilbert space $\cH$) \cite{Nielsen-book, Teiko-book}. For a quantum channel $\Gamma$, $\Gamma(\rho)$ can always be written as $\Gamma(\rho)=\sum_iK_i\rho K^{\dagger}_i$. This form of $\Gamma$ is called as the Kraus representation  of $\Gamma$ and $K_i$'s are called the Kraus operators of $\Gamma$. The  channel $\Gamma^*:\cL(\cK)\rightarrow\cL(\cH)$ is called as the dual channel (i.e., in Heisenberg picture) of $\Gamma:\cS(\cH)\rightarrow\cS(\cK)$ if for all $\rho\in\cS(\cH)$ and $X\in\cL(\cK)$, $\tr[\Gamma(\rho)X]=\tr[\rho\Gamma^*(X)]$ holds, where $\cL(\cH)$ is the set of bounded linear operators on the Hilbert space $\cH$.

A special type of channel is the depolarising channel. A depolarising channel $\Gamma^t_d$ is defined as $\Gamma^t_d(\rho)=t\rho+(1-t)\frac{\Id}{d}$ for all $\rho\in\cS(\cH)$ and $t\in[-\frac{1}{3}, 1]$. To simplify the notation we have written $\Id_{d\times d}$ as $\Id$.

\subsection{Quantum Instruments}
A quantum instrument $\cI$ is a set of completely positive (CP) maps $\{\Phi_i:\cS(\cH)\rightarrow\cL^+(\cK)\}$ i.e., $\cI=\{\Phi_i\}$ such that $\Phi=\sum_i\Phi_i$ is a quantum channel where  $\cL^+(\cH)$ is the set of positive bounded linear operators on the Hilbert space $\cH$ \cite{Teiko-book}. Suppose $\cA=\{A_i\}$ is an observable. A quantum instrument $\cI=\{\Phi_i\}$ is called $\cA$-compatible instrument if $\tr[\Phi_i(\rho)]=\tr[\rho A_i]$ for all $i\in \Omega_{\cA}$. Therefore, the observable $\cA$ can be measured using the instrument $\cI$.

There exist a special type of of instruments which are known as Luder's instruments. For an observable $\cA=\{A_i\}$, the $\cA$-compatible Luder's instrument is defined as $\cI^L_{\cA}=\{\Phi^{L}_{A_i}\}^{n_{\cA}}_{i=1}$ such that $\Phi^{L}_{A_i}(\rho)=\sqrt{A_i}\rho\sqrt{A_i}$ for all $i\in\{1,....,n_{\cA}\}$.

\subsection{Quantifying unsharpness of observables via uncertainty}\label{sub:prelim:uu}
In this subsection, we briefly discuss the approach of the Ref. \cite{Luo-u}. For the complete discussion, readers can check the Ref. \cite{Luo-u}. Suppose we have an observable $\cA=\{A_i\}^{n_{\cA}}_{i=1}$ acting on the Hilbert space $\cH$. Suppose the exact value of the $i$th outcome is $\alpha_i$. Let $\alpha$ be a row vector such that $\alpha=(\alpha_1, \alpha_2,....,\alpha_{n_{\cA}})$. Then $K^{\cA}_{\alpha}$ is defined as $K^{\cA}_{\alpha}=\sum_i\alpha_i A_i$. Similarly, $K^{\cA}_{\alpha^2}$ is defined as $K^{\cA}_{\alpha^2}=\sum_i\alpha^2_iA_i$. Next a noise operator $N^{\cA}_{\alpha}=K^{\cA}_{\alpha^2}-(K^{\cA}_{\alpha})^2$ is introduced. Then a function $F_{\rho}(\cA, \alpha)$ is introduced such that 

\begin{align}
F_{\rho}(\cA, \alpha)=\tr[\rho N^{\cA}_{\alpha}].
\end{align}
Clearly, $F_{\rho}(\cA, \alpha)\geq 0$. $F_{\rho}(\cA, \alpha)=0$ for all $\rho\in\cS(\cH)$ iff $\cA$ is a PVM. Now, it is shown in the Ref. \cite{Luo-u} that

\begin{align}
F_{\rho}(\cA, \alpha)=\alpha F_{\rho}(\cA)\alpha^T 
\end{align}

where $F_{\rho}(\cA)$ is a matrix such that

\begin{align}
[F_{\rho}(\cA)]_{ij}=\delta_{ij}\tr[\rho A_i]-\tr[\rho(\frac{A_iA_j+A_jA_i}{2})].\label{eq:F}
\end{align}

 Here $\delta_{ij}$ is Kronecker delta. Since, $F_{\rho}(\cA, \alpha)\geq 0$, $F_{\rho}(\cA)\geq 0$ and  $F_{\rho}(\cA)= 0$ iff $\cA$ is a PVM. \emph{This $F_{\rho}(\cA)$ matrix is independent of $\alpha$ and very important to construct the the unsharpness measure of an observable $\cA$.} Next, the matrix $\cF(\cA)$ is defined as $\cF(\cA)=F_{\frac{\Id}{d}}(\cA)$. Now it has been mentioned in the Ref. \cite{Luo-u} that \emph{any unitarily invariant norm} of $\cF(\cA)$ can quantify of the unsharpness of $\cA$. For simplicity they have taken $l^1$ norm $\|.\|_1$ which is defined as $\|X\|_1=\sum_{ij}|[X]_{ij}|$ for a matrix $X$. Therefore, the unsharpness measure of an observable $\cA$ is

\begin{align}
f(\cA)=\|\cF(\cA)\|_1.
\end{align}

\section{Luder's Instrument-based unsharpness measures of observables}\label{sec:L}
From this section, we start to discuss our main results.
\subsection{Construction and the upper bound of the Luder's Instrument-based unsharpness measure $\cE^L$}\label{subsec:L-c-b}
The sharp quantum observables (PVMs) have an interesting property that makes those observables different from the unsharp observables. Next, we discuss this property of PVMs which motivates us to quantify the unsharpness of the observables in the following outcome independent way. Suppose Alice is measuring an observable $\cA=\{A_i\}$ on a quantum state $\rho\in\cH_d$  through the Luder's instrument $\cL^{\cA}=\{\Phi^{L}_{A_i}(\rho)\}$. After obtaining the outcome $i$, the post measurement state will be $\rho^{\prime}_i=\frac{\Phi^{L}_{A_i}(\rho)}{\tr[\Phi^{L}_{A_i}(\rho)]}=\frac{\sqrt{A_i}_i\rho \sqrt{A_i}}{\tr[\rho A_i]}$. The probability of obtaining the outcome $i$ is $p_{i}=\tr[\rho A_i]$. Now, after obtaining the outcome $i$ if one more time the observable $\cA$ is measured by Alice on this post measurement state $\rho^{\prime}_i$, the probability of again obtaining the same outcome $i$ is

\begin{align}
p_{ii}=\tr[\rho^{\prime}_i A_i]=\frac{\tr[\rho A_i^2]}{\tr[\rho A_i]}.
\end{align}
  Now if $\cA$ is PVM $A^2_i=A_i$ for all $i$. Therefore, $p_{ii}=1$ if $\cA$ is a PVM.\emph{ Therefore, if $\cA$ is a PVM, on successive measurements of $\cA$, the outcome will definitely repeat.} If $\cA$ is not a PVM, there exist an outcome $j$ for which $A_j<A^2_j$ and therefore, $p_{jj}<1$. Therefore, there is a non-zero probability that an unsharp observable will not produce the same outcome on immediate successive measurements of the same observable. \emph{This is a feature of an unsharp observable or equivalently is the evidence of the unsharpness of an observable and is of course an essential difference between a PVM and POVM}. This fact motivates us to quantify the unsharpness of the observables in the following outcome-independent way.\\
In the above experiment, the average probability that any outcome will repeat in the successive measurement of $\cA$ is 

\begin{align}
\cP^L(\rho;\cA)&=\sum_i p_ip_{ii}=\sum_i\tr[\rho A^2_i]\nonumber\\
&=\tr[\rho \sum_iA^2_i]=\tr[\rho E^{\cA}]\label{eq:L-out-rep-E}
\end{align}
where $E^{\cA}=\sum_iA^2_i$. We will call $E^{\cA}$ as $E$-matrix of $\cA$. Clearly, $E^{\cA}$ is a positive Hermitian matrix and $E^{\cA}\leq \Id$. Now the average probability that a outcome will never repeat is 

\begin{align}
\cE^L(\rho;\cA)&=1-\cP^L(\rho;\cA)\nonumber\\
&=\tr[\rho(\Id-E^{\cA})]\label{eq:el:rho_A}\\
&\leq \|\rho\|_{tr}\|\Id-E^{\cA})\|\nonumber\\
&=\|(\Id-E^{\cA})\|\label{Luder-bound}
\end{align}

where $\|X\|$ denotes the operator norm i.e., the highest eigen value of a Hermitian matrix $X$ and $\|X\|_{tr}$ denotes the trace norm of a Hermitian matrix $X$ i.e., $\|X\|_{tr}=\tr[\sqrt{X^{\dagger}X}]$. In the second last inequality, we have used the fact that if $T\in\cL(\cH)$ is a trace-class (i.e., has a finite trace norm) Hermitian operator and $S\in\cL(\cH)$ is a arbitray Hermitian operator, then $\tr[ST]\leq \|T\|_{tr}\|S\|$ \cite{Teiko-book}. In the last equality, we have used the fact that $\rho$ is Hermitian and $\rho\geq 0$ and therefore, $\|\rho\|_{tr}=\tr[\rho]=1$. Now the bound written in equation \eqref{Luder-bound}, is achievable. Suppose $\ket{e^{\prime}_{max}}$ is the eigen state (i.e., normalised eigen vector) corresponding to the maximum eigen value of $(\Id-E^{\cA})$. Then, $\bra{e^{\prime}_{max}}E^{\cA}\ket{e^{\prime}_{max}}=\|(\Id-E^{\cA})\|$. Taking maximization of the quantity $\cE^L(\rho;\cA)$ over all set density matrices $\rho$, we obtain

\begin{align}
\cE^L(\cA)&=\max_{\rho}\cE^L(\rho;\cA)\nonumber\\
&=\bra{e^{\prime}_{max}}E^{\cA}\ket{e^{\prime}_{max}}\nonumber\\
&=\|(\Id-E^{\cA})\|.
\end{align}
We define $\cE^L(\cA)$ as the Luder's instrument-based unsharpness  measure of the observable $\cA$. Clearly, if $\cA$ is a PVM, $E^{\cA}=\Id$ and therefore, $\cE^L(\cA)=0$. If $\cA$ is not a PVM then there exists at least one $i$ such that $A_i^2<A_i$ and therefore, $E^{\cA}<\Id$ and therefore, $\cE^L(\cA)>0$. \emph{Therefore, $\cE^L$ is a faithful measure.} Clearly, $\cE^L$ measure is independent of the bijective relabeling of outcomes and of the values of outcomes.

There exist a upper bound for this unsharpness measure $\cE^L$. Our following lemma states that-

\begin{lemma}
For an observable $\cA=\{A_i\}^{n_{\cA}}_i$, $\cE^L(\cA)\leq 1-\frac{1}{n_{\cA}}$. This bound is achieved by the observable $\cT^{n_{\cA}}=\{I^{n_{\cA}}_i=\frac{\Id}{n_{\cA}}\}^{n_{\cA}}_{i=1}$.\label{le:elb}
\end{lemma}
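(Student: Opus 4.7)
The plan is to reduce the statement to a lower bound on the smallest eigenvalue of $E^{\cA}=\sum_i A_i^2$. Since $\cE^L(\cA)=\|\Id - E^{\cA}\|$ and $E^{\cA}\leq \Id$, proving $\cE^L(\cA)\leq 1-\frac{1}{n_{\cA}}$ is equivalent to proving the operator inequality $E^{\cA}\geq \frac{1}{n_{\cA}}\Id$. I would establish this pointwise: fix an arbitrary unit vector $\ket{\psi}$ and show that $\bra{\psi}E^{\cA}\ket{\psi}\geq \frac{1}{n_{\cA}}$.

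The first step is a Cauchy--Schwarz inequality on each POVM element. For any unit $\ket{\psi}$ and each $i$,
\begin{align}
\bra{\psi}A_i\ket{\psi}^2 = |\langle\psi,A_i\psi\rangle|^2 \leq \langle\psi,\psi\rangle\,\langle A_i\psi,A_i\psi\rangle = \bra{\psi}A_i^2\ket{\psi}.
\end{align}
Summing over $i$ gives $\bra{\psi}E^{\cA}\ket{\psi}\geq \sum_i x_i^2$ where $x_i:=\bra{\psi}A_i\ket{\psi}$. The normalization $\sum_i A_i=\Id$ forces $\sum_i x_i=1$, so by Cauchy--Schwarz (or the power-mean inequality) applied to the vector $(x_1,\dots,x_{n_{\cA}})$ against the all-ones vector, $\sum_i x_i^2 \geq \frac{1}{n_{\cA}}(\sum_i x_i)^2 = \frac{1}{n_{\cA}}$. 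Combining the two estimates yields $\bra{\psi}E^{\cA}\ket{\psi}\geq \frac{1}{n_{\cA}}$ uniformly over unit vectors, hence $E^{\cA}\geq \frac{1}{n_{\cA}}\Id$ and consequently $\cE^L(\cA)=\|\Id-E^{\cA}\|\leq 1-\frac{1}{n_{\cA}}$.

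For tightness, I would directly evaluate the bound on the trivial observable $\cT^{n_{\cA}}$. Since each effect is $\frac{\Id}{n_{\cA}}$,
\begin{align}
E^{\cT^{n_{\cA}}} = \sum_{i=1}^{n_{\cA}}\left(\frac{\Id}{n_{\cA}}\right)^2 = \frac{\Id}{n_{\cA}},
\end{align}
so $\Id - E^{\cT^{n_{\cA}}} = (1-\frac{1}{n_{\cA}})\Id$, whose operator norm is exactly $1-\frac{1}{n_{\cA}}$. This shows the bound is achieved.

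I do not expect any serious obstacle here: both inequalities are elementary Cauchy--Schwarz applications. The only subtlety worth flagging is that the vector-level inequality $\bra{\psi}A_i^2\ket{\psi}\geq \bra{\psi}A_i\ket{\psi}^2$ is what transports the scalar argument to an operator statement, and it holds uniformly in $\ket{\psi}$, which is exactly what is needed to conclude the operator inequality $E^{\cA}\geq \frac{1}{n_{\cA}}\Id$.
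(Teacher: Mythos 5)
Your proof is correct and follows essentially the same route as the paper: both reduce the claim to the lower bound $\bra{\psi}E^{\cA}\ket{\psi}\geq\sum_i\bra{\psi}A_i\ket{\psi}^2\geq\frac{1}{n_{\cA}}$ using $\sum_i\bra{\psi}A_i\ket{\psi}=1$, and both verify tightness on $\cT^{n_{\cA}}$ by direct computation. The only cosmetic differences are that you invoke Cauchy--Schwarz where the paper inserts a resolution of the identity and uses Lagrange multipliers, and you prove the operator inequality $E^{\cA}\geq\frac{1}{n_{\cA}}\Id$ for all unit vectors rather than only at the minimal eigenvector.
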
 

\begin{proof}
Suppose, $a^{\prime_{max}}=1-a_{min}$ is the maximum eigen value of $(\Id-E^{\cA})$ and $\ket{a^{\prime_{max}}}$ is the corresponding eigen vector. Threfore, $\|\Id-E^{\cA}\|=a^{\prime_{max}}$. This implies that $a_{min}$ is the minimum eigen value of $E^{\cA}$ and $\ket{a^{\prime_{max}}}=\ket{a_{min}}$ is the corresponding eigen vector. Now suppose, $\{\ket{n}\}$ is the eigen basis of $E^{\cA}$. Therefore, for some $n=n^{\prime}$, $\ket{n^{\prime}}=\ket{a_{min}}$. Then

\begin{align}
a_{min}&=\bra{a_{min}}E^{\cA}\ket{a_{min}}\nonumber\\
&=\bra{a_{min}}\sum_i A^2_i\ket{a_{min}}\nonumber\\
&=\sum_i\bra{a_{min}} A^2_i\ket{a_{min}}\nonumber\\
&=\sum_{i}\sum^d_{n=1}\bra{a_{min}} A_i\ket{n}\bra{n}E_i\ket{a_{min}}\nonumber\\
&=\sum_{i}\sum^d_{n=1}|\bra{a_{min}} A_i\ket{n}|^2\nonumber\\
&\geq\sum_{i}|\bra{a_{min}} A_i\ket{a_{min}}|^2\nonumber\\
&=\sum_{i} x_i^2\label{eq:min_x}
\end{align}
where $x_i=|\bra{a_{min}} A_i\ket{a_{min}}|$. Now we know that
\begin{equation}
\sum_i x_i=\sum_i|\bra{a_{min}} A_i\ket{a_{min}}|=1
\end{equation}
 as $A_i\geq 0$ for all $i\in\{1,....,n^{\cA}\}$ and $\sum_iA_i=\Id$. Now we know from the optimization method of Lagrange's undetermined multipliers that $\sum^n_{i=1}x^2_i$ takes the minimum value subject to condition $\sum^n_ix_i=1$ for $x_1=x_2=....=x_n=\frac{1}{n}$. Therefore, using this fact, the inequality \eqref{eq:min_x} becomes
 
\begin{align}
a_{min}&\geq \sum_ix_i^2\nonumber\\
&\geq \sum_i\frac{1}{n^2_{\cA}}\nonumber\\
&=\frac{1}{n_{\cA}}.
\end{align}

 This implies that
 
\begin{align}
\cE^L(\cA)&=\|\Id-E^{\cA}\|\nonumber\\
&=1-a_{min}\nonumber\\
&\leq 1-\frac{1}{n_{\cA}}.
\end{align}
 Now, for the observable $\cT^{n_{\cA}}$,
\begin{align}
\cE^L(\cT^{n_{\cA}})&=\|\Id-E^{\cT^{n_{\cA}}}\|\nonumber\\
&=\|\Id-\sum_i\frac{\Id}{n^2_{\cA}}\|\nonumber\\
&=\|\Id-\frac{\Id}{n_{\cA}}\|\nonumber\\
&=1-\frac{1}{n_{\cA}}.
\end{align}

 \end{proof}
 
 We can also define another measure of the unsharpness in a different way. It is to be noted that equation \eqref{eq:el:rho_A} is linear in $\rho$. Now let $\mathfrak{R}=\{\rho_1,....,\rho_k\}$ is a set of $k$ states.
 
 Then, the simple average (i.e., with same probability $\frac{1}{k}$) of $\cE^L(\rho;\cA)$ over this set $\mathfrak{R}$ is
 
 \begin{align}
 <\cE^L(\rho;\cA)>_{\mathfrak{R}}=&\sum^k_{i=1}\frac{1}{k}\cE^L(\rho_i;\cA)\nonumber\\
 =&\sum^k_{i=1}\frac{1}{k}\tr[\rho_i(\Id-E^{\cA})]\nonumber\\
 =&\tr[(\sum^k_{i=1}\frac{1}{k}\rho_i)(\Id-E^{\cA})]\nonumber\\
  =&\tr[<\rho>_{\mathfrak{R}}(\Id-E^{\cA})]\nonumber\\\label{eq:avg_r}
\end{align}

where $<\rho>_{\mathfrak{R}}=(\sum^k_{i=1}\frac{1}{k}\rho_i)$ is
the simple average (i.e., with same probability $\frac{1}{k}$) of the states over the set $\mathfrak{R}$. Generalising equation \eqref{eq:avg_r} for whole state space $\cS(\cH)$, we get that

\begin{align}
 <\cE^L(\rho;\cA)>_{\cS(\cH)}=&\tr[<\rho>_{\cS(\cH)}(\Id-E^{\cA})]\nonumber\\
  =&\tr[\frac{\Id}{d}(\Id-E^{\cA})]\nonumber\\
  =&\cE^L(\frac{\Id}{d};\cA).
\end{align}
where $<\rho>_{\cS(\cH)}$ is the simple average of the states over the whole state space $\cS(\cH)$ and in the second last equality we have used the well-known fact that $<\rho>_{\cS(\cH)}=\frac{\Id}{d}$. We define the unsharpness measure of $\cA$ as 

\begin{align}
\cE^{\prime L}(\cA)=&<\cE^L(\rho;\cA)>_{\cS(\cH)}\nonumber\\
&=\cE^L(\frac{\Id}{d};\cA).\label{eq:def_elprm}
\end{align}

Now the lemma below states the upper bound of $\cE^{\prime L}(\cA)$.

\begin{lemma}
For an observable $\cA=\{A_i\}^{n_{\cA}}_i$, $\cE^{\prime L}(\cA)\leq 1-\frac{1}{n_{\cA}}$. This bound is achieved by the observable $\cT^{n_{\cA}}=\{I^{n_{\cA}}_i=\frac{\Id}{n_{\cA}}\}^{n_{\cA}}_{i=1}$.\label{le:elprmb}
\end{lemma}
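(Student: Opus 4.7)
The plan is to turn the statement into a purely scalar inequality about $\sum_i \tr[A_i^2]$ and then apply two Cauchy–Schwarz-type steps in succession. From the definition in equation \eqref{eq:def_elprm},
\begin{align}
\cE^{\prime L}(\cA) = \tr\!\Big[\tfrac{\Id}{d}(\Id - E^{\cA})\Big] = 1 - \frac{1}{d}\sum_i \tr[A_i^2],
\end{align}
so the desired bound $\cE^{\prime L}(\cA) \leq 1 - \frac{1}{n_{\cA}}$ is equivalent to the lower bound $\sum_i \tr[A_i^2] \geq \frac{d}{n_{\cA}}$. This reformulation sidesteps the maximization over states that made Lemma \ref{le:elb} technical; here we only need to control a single trace.

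First I would use Cauchy–Schwarz on the Hilbert–Schmidt inner product: for each $i$,
\begin{align}
(\tr[A_i])^2 = (\tr[A_i \cdot \Id])^2 \leq \tr[A_i^2]\,\tr[\Id] = d\,\tr[A_i^2],
\end{align}
which gives $\tr[A_i^2] \geq (\tr[A_i])^2/d$. Setting $t_i := \tr[A_i]$, and using $\sum_i A_i = \Id$ so that $\sum_i t_i = d$, I would then apply a second Cauchy–Schwarz (equivalently the Lagrange-multiplier step used in Lemma \ref{le:elb}) to the scalars $t_i$: $\sum_i t_i^2 \geq (\sum_i t_i)^2/n_{\cA} = d^2/n_{\cA}$. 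Chaining the two estimates yields
\begin{align}
\sum_i \tr[A_i^2] \;\geq\; \frac{1}{d}\sum_i t_i^2 \;\geq\; \frac{d}{n_{\cA}},
\end{align}
which is exactly what is needed.

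For tightness, I would just plug in $\cT^{n_{\cA}}=\{\Id/n_{\cA}\}_{i=1}^{n_{\cA}}$: one has $E^{\cT^{n_{\cA}}} = n_{\cA}\cdot \Id/n_{\cA}^2 = \Id/n_{\cA}$, so $\cE^{\prime L}(\cT^{n_{\cA}}) = \tr[\tfrac{\Id}{d}(\Id-\tfrac{\Id}{n_{\cA}})] = 1 - \tfrac{1}{n_{\cA}}$. I do not anticipate a real obstacle: both inequalities used become equalities precisely when $A_i \propto \Id$ with all $t_i$ equal, i.e.\ exactly for $\cT^{n_{\cA}}$, so the saturation case is consistent with the estimates. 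The only care required is to keep the direction of the inequalities straight (lower bound on $\sum_i \tr[A_i^2]$ produces an upper bound on $\cE^{\prime L}(\cA)$).
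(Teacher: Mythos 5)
Your proof is correct, but it takes a genuinely different route from the paper. The paper's own proof is a two-line reduction to Lemma \ref{le:elb}: since $\cE^{\prime L}(\cA)=\cE^L(\tfrac{\Id}{d};\cA)\leq\max_{\rho}\cE^L(\rho;\cA)=\cE^L(\cA)$, the bound $1-\tfrac{1}{n_{\cA}}$ is inherited from the operator-norm result for $\cE^L$. You instead work with the exact closed form $\cE^{\prime L}(\cA)=1-\tfrac{1}{d}\sum_i\tr[A_i^2]$ and prove the scalar inequality $\sum_i\tr[A_i^2]\geq d/n_{\cA}$ by chaining a Hilbert--Schmidt Cauchy--Schwarz step, $\tr[A_i^2]\geq(\tr[A_i])^2/d$, with the discrete estimate $\sum_i t_i^2\geq(\sum_i t_i)^2/n_{\cA}$ under the constraint $\sum_i t_i=d$; both steps are valid and the tightness check for $\cT^{n_{\cA}}$ matches the paper's. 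What your route buys: it is self-contained (it does not need the eigenvector/operator-norm machinery of Lemma \ref{le:elb}), it produces an explicit formula for $\cE^{\prime L}$ in terms of the purities $\tr[A_i^2]$, and it characterizes the equality case exactly ($A_i=\Id/n_{\cA}$ for all $i$), which the paper's chain $\cE^{\prime L}\leq\cE^L\leq 1-\tfrac{1}{n_{\cA}}$ does not immediately give since the first inequality there need not be saturated in general. What the paper's route buys is brevity and the structural observation that the averaged measure is always dominated by the maximized one.
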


\begin{proof}
From the equation \eqref{eq:def_elprm}, we get

\begin{align}
\cE^{\prime L}(\cA)=&\cE^L(\frac{\Id}{d};\cA)\nonumber\\
\leq& \max_{\rho}\cE^L(\rho;\cA)\nonumber\\
=&\cE^L(\cA)\nonumber\\
\leq& 1-\frac{1}{n_{\cA}}.
\end{align}

Now, it is easy to check that $\cE^{\prime L}(\cT^{n_{\cA}})=1-\frac{1}{n_{\cA}}$.
\end{proof}

\begin{remark}\label{r1}
For an observable $\cA=\{A_i\}$, it is very easy to prove that $\cE^L(\cA)=\cE^L(\cA^U)$ and $\cE^{\prime L}(\cA)=\cE^{\prime L}(\cA^U)$ which $\cA^U=\{U^{\dagger}A_iU\}$. Therefore, $\cE^L$ and $\cE^{\prime L}(\cA)$ does not change if an unitary is acted on the observables in the Heisenberg picture.
\end{remark}

\subsection{Monotonicity of $\cE^L$ and $\cE^{\prime L}$ under a class of fuzzifying processes}\label{subsec:el-mf}
If $\cE^L$ is a useful measure of unsharpness (fuzziness), it should be monotonically non-decreasing under the processes which fuzzify the observables i.e., under the processes which make the observables more unsharp. These processes are called \emph{fuzzifying processes}. One may intuit that coarse-graining (a process where two or more outcomes are treated as a single one) is a fuzzifying process. We show through the next example that this is not true in general. 

\begin{example}\label{ex:postpross}
Consider two observables $\cA=\{A_i\}^3_{i=1}$ and $\cB=\{B_i\}^2_{i=1}$ acting on $\cH_3$ where $A_1=\frac{1}{2}\ket{1}\bra{1}+\frac{1}{4}\ket{2}\bra{2},~ A_2=\frac{1}{2}\ket{1}\bra{1}+\frac{3}{4}\ket{2}\bra{2},~ A_3=\ket{3}\bra{3}$ and $B_1=\ket{1}\bra{1}+\ket{2}\bra{2},~ B_2=\ket{3}\bra{3}$. clearly, $A_1+A_2=B_1$ and $A_3=B_2$ and therefore, $\cB$ is a coarse-graining of $\cA$. But $\cB$ is a PVM and $\cA$ is not a PVM. Therefore, $\cE^L(\cA)>0$ and $\cE^L(\cB)=0$. Therefore, under this kind of classical post-processing of the outcomes $\cE^L$ may be decreasing.
\end{example}

The above example shows that it is not possible to prove that $\cE^{\cL}$ is monotonically non-decreasing under the classical post-processing of outcomes as it is not a fuzzifying process, in general. Furthermore, one may intuit that the convex combination of observables is a fuzzifying process i.e. if an arbitrary observable $\cA$ is convexly combined with any other arbitrary observable, the resulting observable will be more unsharp than $\cA$. We will show through the next example that this is also not true, in general.

\begin{example}\label{ex:convex}
Consider a pair of observables $\cA=\{A_i\}$ and $\cB=\{B_i\}$ acting on $\cH_3$ where $A_1=\frac{1}{2}\ket{1}\bra{1}+\frac{1}{4}\ket{2}\bra{2},~ A_2=\frac{1}{2}\ket{1}\bra{1}+\frac{3}{4}\ket{2}\bra{2},~ A_3=\ket{3}\bra{3}$ and $B_1=\ket{1}\bra{1}, ~B_2=\ket{2}\bra{2},~B_3=\ket{3}\bra{3}$. We define a observable $\cC^{\lambda}=\{C^{\lambda}_i\}$ where $C^{\lambda}_i=\lambda A_i+(1-\lambda)B_i$ and $0\leq \lambda\leq 1$. Clearly, $C^{\lambda}_1=[1-\frac{\lambda}{2}]\ket{1}\bra{1}+\frac{\lambda}{4}\ket{2}\bra{2},~C^{\lambda}_2=\frac{\lambda}{2}\ket{1}\bra{1}+[1-\frac{\lambda}{4}]\ket{2}\bra{2},~C^{\lambda}_3=\ket{3}\bra{3}$. It can be observed that the sharpness of the observable $\cC^{\lambda}$ increases with  with the decrement of $\lambda$ and for $\lambda=0$, $\cC^{0}=\cB$ which is a PVM. Now since, for $\lambda=1$, $\cC^{1}=\cA$, $\cC^{\lambda}$ is always sharper than $\cA$ for all values of $\lambda$. It can be easily shown that $\cE^L(\cA)\geq \cE^L(\cC^{\lambda})$ for all values of $\lambda$'s.
\end{example}
The above example shows that it is also not possible to prove that $\cE^{\cL}$ is monotonically non-decreasing under the convex combination of observables as it is not a fuzzifying process, in general.\\
Example \ref{ex:postpross} and example \ref{ex:convex} suggest that it is not an easy task to specify all fuzzifying processes. But one can specify the special classes of fuzzifying processes. One can easily understand that the addition of white noise in the observables is a fuzzifying process. Therefore, we restrict ourselves to this particular class of fuzzifying processes and show that $\cE^{\cL}$ is monotonically non-decreasing under this class of fuzzifying processes in the following theorem.

\begin{theorem}
Suppose $\cA^{\lambda}=\{A^{\lambda}_i\}_{i=1}^{n_{\cA}}$ is an unsharp version of $\cA=\{A_i\}_{i=1}^{n_{\cA}}$ i.e., $A^{\lambda}_i=\lambda A_i+(1-\lambda)\frac{\Id}{n_{\cA}}$ for all $i\in\{1,.....,n_{\cA}\}$ where $1\geq\lambda\geq 0$. Then $\cE^L(\cA^{\lambda})\geq \cE^L(\cA)$ for all $1\geq\lambda\geq 0$.\label{th:el_mf}
\end{theorem}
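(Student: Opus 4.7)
The plan is to reduce everything to an explicit formula for $E^{\cA^\lambda}$ in terms of $E^{\cA}$, which is clean because the noise we add is a scalar multiple of the identity and therefore commutes with every $A_i$. First I would expand
\begin{align}
(A^\lambda_i)^2 = \lambda^2 A_i^2 + \tfrac{2\lambda(1-\lambda)}{n_{\cA}} A_i + \tfrac{(1-\lambda)^2}{n_{\cA}^2}\Id,
\end{align}
then sum over $i$ and use $\sum_i A_i = \Id$ together with the identity $2\lambda(1-\lambda)+(1-\lambda)^2 = 1-\lambda^2$ to obtain the clean relation
\begin{align}
E^{\cA^\lambda} = \lambda^2\, E^{\cA} + \tfrac{1-\lambda^2}{n_{\cA}}\,\Id.
\end{align}

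Next I would rearrange this to isolate the quantity appearing in $\cE^L$. Writing $\Id = \lambda^2 \Id + (1-\lambda^2)\Id$ and collecting terms gives
\begin{align}
\Id - E^{\cA^\lambda} = \lambda^2\bigl(\Id - E^{\cA}\bigr) + (1-\lambda^2)\bigl(1-\tfrac{1}{n_{\cA}}\bigr)\Id.
\end{align}
Since $A_i^2 \leq A_i$ for every $i$, the operator $\Id - E^{\cA}$ is positive semidefinite, and the second term is just a nonnegative multiple of the identity. The operator norm therefore adds: $\|X + c\Id\| = \|X\| + c$ whenever $X \geq 0$ and $c \geq 0$. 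Applying this yields the key identity
\begin{align}
\cE^L(\cA^\lambda) = \lambda^2\, \cE^L(\cA) + (1-\lambda^2)\bigl(1-\tfrac{1}{n_{\cA}}\bigr).
\end{align}

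Finally, I would invoke Lemma \ref{le:elb}, which gives $\cE^L(\cA) \leq 1 - \tfrac{1}{n_{\cA}}$, to conclude
\begin{align}
\cE^L(\cA^\lambda) - \cE^L(\cA) = (1-\lambda^2)\left[\bigl(1-\tfrac{1}{n_{\cA}}\bigr) - \cE^L(\cA)\right] \geq 0,
\end{align}
since both factors are nonnegative for $\lambda \in [0,1]$. I do not anticipate a real obstacle here: the only step that requires any care is noticing that the cross term and the pure-noise term in the expansion of $(A_i^\lambda)^2$ conspire into the single coefficient $1-\lambda^2$, and that the additivity of the operator norm on $X + c\Id$ (for $X \geq 0$, $c \geq 0$) applies verbatim because the white-noise contribution is a true scalar multiple of $\Id$. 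Essentially the same computation, with $\tfrac{\Id}{d}$ in place of the optimizing eigenvector, proves the analogous monotonicity statement for $\cE^{\prime L}$, so the theorem extends to the averaged measure with no additional work.
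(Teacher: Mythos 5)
Your proposal is correct and follows essentially the same route as the paper: the same expansion of $E^{\cA^\lambda}$, the same decomposition $\Id-E^{\cA^\lambda}=\lambda^2(\Id-E^{\cA})+(1-\lambda^2)\bigl(1-\tfrac{1}{n_{\cA}}\bigr)\Id$, and the same final appeal to Lemma \ref{le:elb}. The only (cosmetic) difference is that you obtain the norm identity in one step from the spectral shift $\|X+c\Id\|=\|X\|+c$ for $X\geq 0$, $c\geq 0$, whereas the paper derives the same equality by combining the triangle inequality with an evaluation at the maximizing eigenvector of $\Id-E^{\cA}$; both arguments are valid and amount to the same observation.
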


 \begin{proof}
 The $E$-matrix of $\cA^{\lambda}$, is given by
 
 \begin{align}
 E^{\cA^{\lambda}}&=\sum_i(A^{\lambda}_i)^2\nonumber\\
 &=\sum_i(\lambda A_i+\frac{1-\lambda}{n_{\cA}}\Id)^2\nonumber\\
 &=\sum_i(\lambda^2 A_i^2+\frac{2\lambda(1-\lambda)}{n_{\cA}}A_i+\frac{(1-\lambda)^2}{n^2_{\cA}}\Id)\nonumber\\
 &=\lambda^2 \sum_iA_i^2+\frac{2\lambda(1-\lambda)}{n_{\cA}}\Id+\frac{(1-\lambda)^2}{n_{\cA}}\Id\nonumber\\
 &=\lambda^2 \sum_iA_i^2+\frac{(1-\lambda^2)}{n_{\cA}}\Id
 \end{align}
 Therefore, 
 \begin{align}
 \Id-E^{\cA^{\lambda}}=&\Id-(\lambda^2 \sum_iA_i^2+\frac{(1-\lambda^2)}{n_{\cA}}\Id)\nonumber\\
 =&\lambda^2(\Id-E^{\cA})+(1-\lambda^2)(1-\frac{1}{n_{\cA}})\Id.\label{eq:epm}
 \end{align}

Now, using the properties of the operator norm, we get

\begin{align}
 \|\Id-E^{\cA^{\lambda}}\|\leq&\|\lambda^2(\Id-E^{\cA})+(1-\lambda^2)(1-\frac{1}{n_{\cA}})\Id\|\nonumber\\
 &=\lambda^2\|\Id-E^{\cA}\|+(1-\lambda^2)(1-\frac{1}{n_{\cA}})\|\Id\|\nonumber\\
 &=\lambda^2\|\Id-E^{\cA}\|+(1-\lambda^2)(1-\frac{1}{n_{\cA}}).\label{eq:mod_epm}
 \end{align}

 Suppose, $\ket{e^{\prime}_{max}}$ is the eigen state (i.e., normalised eigenvector) of $(\Id-E^{\cA})$ corresponding to the highest eigenvalue of $(\Id-E^{\cA})$. Then  $\|\Id-E^{\cA}\|=\bra{e^{\prime}_{max}}(\Id-E^{\cA})\ket{e^{\prime}_{max}}$. Then, using the properties of the operator norm and equation \eqref{eq:epm}, we get
 
 \begin{align}
 \|\Id-E^{\cA^{\lambda}}\|\geq& \bra{e^{\prime}_{max}}(\Id-E^{\cA^{\lambda}})\ket{e^{\prime}_{max}}\nonumber\\
=&\lambda^2\bra{e^{\prime}_{max}}(\Id-E^{\cA})\ket{e^{\prime}_{max}}\nonumber\\
&+(1-\lambda^2)(1-\frac{1}{n_{\cA}})\bra{e^{\prime}_{max}}\Id\ket{e^{\prime}_{max}}\nonumber\\
&=\lambda^2\|(\Id-E^{\cA})\|+(1-\lambda^2)(1-\frac{1}{n_{\cA}}).\label{eq:mod_epm_i}
 \end{align}
 From inequality \eqref{eq:mod_epm} and inequality \eqref{eq:mod_epm_i}, we get
 
 \begin{equation}
 \|\Id-E^{\cA^{\lambda}}\|=\lambda^2\|(\Id-E^{\cA})\|+(1-\lambda^2)(1-\frac{1}{n_{\cA}}).
 \end{equation}
 
 Therefore,
 
 \begin{align}
 \cE^L(\cA^{\lambda})-\cE^L(\cA)=& (1-\lambda^2)(1-\frac{1}{n_{\cA}})-(1-\lambda^2)\|(\Id-E^{\cA})\|\nonumber\\
 =&(1-\lambda^2)[(1-\frac{1}{n_{\cA}})-\cE^L(\cA)]\nonumber\\
 \geq&0.
 \end{align}
  We have used Lemma \ref{le:elb} to obtain the last inequality. Hence the theorem is proved.
 \end{proof}
 
 Next we have an immediate corollary-
 
 \begin{corollary}
For any observable $\cA=\{A_i\}$, $\cE^L(\cA^{\lambda_2})\geq \cE^L(\cA^{\lambda_1})$ for all $1\geq\lambda_1\geq\lambda_2\geq 0$.\label{coro:el_mf}
 \end{corollary}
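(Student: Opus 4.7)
The plan is to reduce the corollary to Theorem \ref{th:el_mf} by exploiting a one-parameter semigroup-like structure of the white-noise fuzzification $\cA \mapsto \cA^{\lambda}$. The key identity I would establish first is that applying the construction twice, with parameters $\lambda_1$ and then $\mu$, is the same as applying it once with parameter $\mu\lambda_1$; once this is in hand, the monotonicity in $\lambda$ is an immediate consequence of the monotonicity statement already proved.

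First, I would write out $(\cA^{\lambda_1})^{\mu}$ componentwise. For each $i$,
\begin{align}
(A^{\lambda_1}_i)^{\mu} &= \mu A^{\lambda_1}_i + (1-\mu)\tfrac{\Id}{n_{\cA}} \nonumber\\
&= \mu\bigl[\lambda_1 A_i + (1-\lambda_1)\tfrac{\Id}{n_{\cA}}\bigr] + (1-\mu)\tfrac{\Id}{n_{\cA}} \nonumber\\
&= \mu\lambda_1 A_i + (1-\mu\lambda_1)\tfrac{\Id}{n_{\cA}},
\end{align}
so $(\cA^{\lambda_1})^{\mu} = \cA^{\mu\lambda_1}$ as observables.

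Second, given $1\geq\lambda_1\geq\lambda_2\geq 0$, I would split into cases. If $\lambda_1=0$, then $\lambda_2=0$ too and the inequality is trivial. If $\lambda_1>0$, I set $\mu := \lambda_2/\lambda_1 \in [0,1]$, which together with the identity above yields $\cA^{\lambda_2} = (\cA^{\lambda_1})^{\mu}$. Applying Theorem \ref{th:el_mf} to the observable $\cA^{\lambda_1}$ (with noise parameter $\mu$ in place of $\lambda$) then gives
\begin{equation}
\cE^L(\cA^{\lambda_2}) = \cE^L\bigl((\cA^{\lambda_1})^{\mu}\bigr) \geq \cE^L(\cA^{\lambda_1}),
\end{equation}
which is exactly the claim.

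There is no real obstacle here: the only substantive content is the composition identity $(\cA^{\lambda_1})^{\mu}=\cA^{\mu\lambda_1}$, which follows from a one-line algebraic manipulation, and the rest is a direct invocation of Theorem \ref{th:el_mf}. An analogous argument would also give the corresponding monotonicity statement for $\cE^{\prime L}$ if one wished to record it, since Theorem \ref{th:el_mf}'s proof structure carries over verbatim for the $\frac{\Id}{d}$-evaluated version.
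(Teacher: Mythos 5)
Your proposal is correct and follows essentially the same route as the paper: both establish the composition identity $(\cA^{\lambda_1})^{\mu}=\cA^{\mu\lambda_1}$ with $\mu=\lambda_2/\lambda_1$ and then invoke Theorem \ref{th:el_mf} applied to $\cA^{\lambda_1}$. Your explicit treatment of the degenerate case $\lambda_1=0$ is a small improvement in rigor over the paper's version, which tacitly assumes $\lambda_1>0$ when forming the ratio.
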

  
  \begin{proof}
  The observable $\cA^{\lambda_1}=\{A^{\lambda_1}_i=\lambda_1 A_i+(1-\lambda_1)\frac{\Id}{n_{\cA}}\}$. For notational simplicity we denote all $A^{\lambda_1}_i$ as $A^{\prime}_i$ i.e., $A^{\lambda_1}_i=A^{\prime}_i$ for all $i\in\{1,.....,n_{\cA}\}$ and we also denote the observable $\cA^{\lambda_1}$ as $\cA^{\prime}$ i.e., $\cA^{\lambda_1}=\cA^{\prime}$. Now the observable $\cA^{\lambda_2}=\{A^{\lambda_2}_i=\lambda_2 A_i+(1-\lambda_2)\frac{\Id}{n_{\cA}}\}$. Suppose $\gamma=\frac{\lambda_2}{\lambda_1}$. Clearly, $1\geq\gamma\geq 0$ as $\lambda_2\leq\lambda_1$ and both are positive. Then for all $i\in\{1,.....,n_{\cA}\}$,
  
  \begin{align}
  A^{\lambda_2}_i=&\lambda_2 A_i+(1-\lambda_2)\frac{\Id}{n_{\cA}}\nonumber\\
  =&\gamma\lambda_1A_i+(1-\gamma\lambda_1+\gamma-\gamma)\frac{\Id}{n_{\cA}}\nonumber\\
  =&\gamma\lambda_1A_i+[(1-\gamma)+\gamma(1-\lambda_1)]\frac{\Id}{n_{\cA}}\nonumber\\
  =&\gamma[\lambda_1A_i+(1-\lambda_1)\frac{\Id}{n_{\cA}}]+(1-\gamma)\frac{\Id}{n_{\cA}}\nonumber\\
  =&\gamma A^{\prime}_i+(1-\gamma)\frac{\Id}{n_{\cA}}\nonumber\\
  =&A^{\prime\gamma}_i\label{eq:al1-al2}
\end{align}   
where $A^{\prime\gamma}_i=\gamma A^{\prime}_i+(1-\gamma)\frac{\Id}{n_{\cA}}$ for all $i\in\Omega_{\cA^{\lambda_2}}$.
Therefore, $\cA^{\lambda_2}=\cA^{\prime\gamma}=\{A^{\prime\gamma}_i\}$. Then using the fact that $\cA^{\lambda_1}_i=\cA^{\prime}$ and Theorem \ref{th:el_mf}, we get that  $\cE^L(\cA^{\lambda_2})\geq \cE^L(\cA^{\lambda_1})$. Hence the corollary is proved.
  \end{proof}
 Therefore, $\cE^L(\cA^{\lambda})$ is monotonically non-decreasing with decreasing value of $\lambda$ or equivalently $\cE^L$ is monotonically non-increasing with increasing value of $\lambda$.
 
 Next, we have to prove the monotonicity of $\cE^{\prime L}$ under the addition of white noise. We start with our next theorem.
 
 \begin{theorem}
Suppose $\cA^{\lambda}=\{A^{\lambda}_i\}_{i=1}^{n_{\cA}}$ is an unsharp version of $\cA=\{A_i\}_{i=1}^{n_{\cA}}$ i.e., $A^{\lambda}_i=\lambda A_i+(1-\lambda)\frac{\Id}{n_{\cA}}$ for all $i\in\{1,.....,n_{\cA}\}$ where $1\geq\lambda\geq 0$. Then $\cE^{\prime L}(\cA^{\lambda})\geq \cE^{\prime L}(\cA)$ for all $1\geq\lambda\geq 0$.\label{th:elprm_mf}
\end{theorem}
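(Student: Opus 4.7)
The plan is to piggy-back on the algebra already carried out in the proof of Theorem \ref{th:el_mf}, replacing the operator-norm step (which was the technical core for $\cE^L$) with a straightforward trace against the maximally mixed state. This should actually make the argument easier than for $\cE^L$, because $\cE^{\prime L}$ is linear in $\Id - E^{\cA}$ rather than being a norm.

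First, I would recall from the proof of Theorem \ref{th:el_mf} the identity
\begin{equation}
\Id - E^{\cA^{\lambda}} \;=\; \lambda^2\bigl(\Id - E^{\cA}\bigr) \;+\; (1-\lambda^2)\Bigl(1-\tfrac{1}{n_{\cA}}\Bigr)\Id,
\end{equation}
which holds purely as an operator identity and does not depend on any subsequent norm bounds. Next, I would apply the definition \eqref{eq:def_elprm}, namely $\cE^{\prime L}(\cB) = \tr\bigl[\tfrac{\Id}{d}(\Id - E^{\cB})\bigr]$ for any observable $\cB$. Taking the trace of the above identity against $\tfrac{\Id}{d}$ and using linearity of the trace together with $\tr[\tfrac{\Id}{d}\Id]=1$, I would obtain
\begin{equation}
\cE^{\prime L}(\cA^{\lambda}) \;=\; \lambda^2\,\cE^{\prime L}(\cA) \;+\; (1-\lambda^2)\Bigl(1 - \tfrac{1}{n_{\cA}}\Bigr).
\end{equation}

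From here the monotonicity is immediate: rearranging gives
\begin{equation}
\cE^{\prime L}(\cA^{\lambda}) - \cE^{\prime L}(\cA) \;=\; (1-\lambda^2)\Bigl[\bigl(1-\tfrac{1}{n_{\cA}}\bigr) - \cE^{\prime L}(\cA)\Bigr],
\end{equation}
and the bracket is non-negative by Lemma \ref{le:elprmb}, while $(1-\lambda^2)\geq 0$ for $\lambda\in[0,1]$.

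Since the argument reduces to one operator identity, one application of trace-linearity, and one invocation of the already-proved upper bound, there is no real obstacle; the main thing to be careful about is simply that the constant term $(1-\lambda^2)(1-\tfrac{1}{n_{\cA}})\Id$ in the operator identity contributes its full coefficient under the trace against $\tfrac{\Id}{d}$ (which it does, because $\tr[\tfrac{\Id}{d}\Id]=1$), so no dimension-dependent factor sneaks in. An analogue of Corollary \ref{coro:el_mf} for $\cE^{\prime L}$ would then follow by the same reparametrization $\gamma=\lambda_2/\lambda_1$ used there, via the identity $A^{\lambda_2}_i = A^{\prime\gamma}_i$ established in \eqref{eq:al1-al2}.
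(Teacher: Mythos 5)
Your proposal is correct and follows exactly the paper's own argument: it uses the operator identity \eqref{eq:epm}, traces it against $\tfrac{\Id}{d}$ via the definition \eqref{eq:def_elprm} to get $\cE^{\prime L}(\cA^{\lambda})=\lambda^2\cE^{\prime L}(\cA)+(1-\lambda^2)(1-\tfrac{1}{n_{\cA}})$, and concludes by invoking Lemma \ref{le:elprmb}. There is nothing to add; the approaches coincide.
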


\begin{proof}
 From equation \eqref{eq:def_elprm} and equation \eqref{eq:epm}, we get that
 \begin{align}
 \cE^{\prime L}(\cA^{\lambda})=&\cE^L(\frac{\Id}{d};\cA^{\lambda})\nonumber\\
 =&\tr[\frac{\Id}{d}(\Id-E^{\cA^{\lambda}})]\nonumber\\
 =&\tr[\frac{\Id}{d}(\lambda^2(\Id-E^{\cA})+(1-\lambda^2)(1-\frac{1}{n_{\cA}})\Id)]\nonumber\\
 =&\lambda^2\tr[\frac{\Id}{d}(\Id-E^{\cA})]+(1-\lambda^2)(1-\frac{1}{n_{\cA}})\nonumber\\
 =&\lambda^2\cE^{\prime L}(\cA)+(1-\lambda^2)(1-\frac{1}{n_{\cA}})
 \end{align}
 
 Therefore,
 \begin{align}
 \cE^{\prime L}(\cA^{\lambda})-\cE^{\prime L}(\cA)=&(\lambda^2-1)\cE^{\prime L}(\cA)+(1-\lambda^2)(1-\frac{1}{n_{\cA}})\nonumber\\
 =&(1-\lambda^2)[(1-\frac{1}{n_{\cA}})-\cE^{\prime L}(\cA)]\nonumber\\
 \geq& 0.
 \end{align}
 We have used Lemma \ref{le:elprmb} to obtain the last inequality. Hence, the theorem is proved.
\end{proof}
Next, we have an immdiate corollary

\begin{corollary}
For any observable $\cA=\{A_i\}$, $\cE^{\prime L}(\cA^{\lambda_2})\geq \cE^{\prime L}(\cA^{\lambda_1})$ for all $1\geq\lambda_1\geq\lambda_2\geq 0$.\label{coro:elprm_mf}
 \end{corollary}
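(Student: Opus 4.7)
The plan is to mirror the argument of Corollary \ref{coro:el_mf} and reduce this statement to Theorem \ref{th:elprm_mf}. The key observation is that the depolarizing-style noise used to define $\cA^{\lambda}$ enjoys a semigroup-like composition property: adding white noise at strength $\gamma$ on top of the already noisy observable $\cA^{\lambda_1}$ produces exactly $\cA^{\gamma\lambda_1}$. So if I set $\gamma := \lambda_2/\lambda_1$, which lies in $[0,1]$ because $0\leq \lambda_2 \leq \lambda_1$, I will have $\cA^{\lambda_2}=(\cA^{\lambda_1})^{\gamma}$, and Theorem \ref{th:elprm_mf} applied to the base observable $\cA^{\lambda_1}$ (with noise parameter $\gamma$) immediately delivers $\cE^{\prime L}(\cA^{\lambda_2})\geq \cE^{\prime L}(\cA^{\lambda_1})$.

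The concrete steps are: first, verify the algebraic identity $A^{\lambda_2}_i=\gamma A^{\lambda_1}_i+(1-\gamma)\tfrac{\Id}{n_{\cA}}$ for every $i\in\Omega_{\cA}$. This is exactly the four-line manipulation already carried out in equation \eqref{eq:al1-al2} within the proof of Corollary \ref{coro:el_mf}, so I would cite that equation rather than rewrite it. Second, observe that $\cA^{\lambda_1}$ is itself a legitimate $n_{\cA}$-outcome POVM, so the hypotheses of Theorem \ref{th:elprm_mf} are satisfied with $\cA^{\lambda_1}$ playing the role of $\cA$ and $\gamma$ playing the role of $\lambda$. Third, read off the conclusion $\cE^{\prime L}((\cA^{\lambda_1})^{\gamma})\geq \cE^{\prime L}(\cA^{\lambda_1})$ and rewrite the left-hand side as $\cE^{\prime L}(\cA^{\lambda_2})$.

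The only small caveat is the edge case $\lambda_1=0$, where $\gamma$ would be undefined. But this case is trivial: $\lambda_1=0$ forces $\lambda_2=0$, hence $\cA^{\lambda_1}=\cA^{\lambda_2}=\cT^{n_{\cA}}$ and the inequality holds as equality. I do not anticipate any real obstacle; the corollary is a pure reparametrization of Theorem \ref{th:elprm_mf} that re-uses the same computation invoked for $\cE^L$ in Corollary \ref{coro:el_mf}, so the proof essentially reduces to two citations once the identity $\cA^{\lambda_2}=(\cA^{\lambda_1})^{\gamma}$ is in hand.
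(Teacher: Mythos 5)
Your proposal is correct and follows essentially the same route as the paper: the paper's proof of Corollary~\ref{coro:elprm_mf} likewise cites the identity in equation~\eqref{eq:al1-al2} (i.e., $\cA^{\lambda_2}=(\cA^{\lambda_1})^{\gamma}$ with $\gamma=\lambda_2/\lambda_1$) and then invokes Theorem~\ref{th:elprm_mf}. Your explicit treatment of the edge case $\lambda_1=0$ is a small improvement over the paper, which leaves that degenerate case implicit.
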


\begin{proof}
The proof is similar to the proof of Corollary \ref{coro:el_mf}. From the equation \eqref{eq:al1-al2}, we get that $\cA^{\lambda_2}=\cA^{\prime\gamma}=\{A^{\prime\gamma}_i\}$. Then using the fact that $\cA^{\lambda_1}_i=\cA^{\prime}$ and Theorem \ref{th:elprm_mf}, we get that  $\cE^{\prime L}(\cA^{\lambda_2})\geq \cE^{\prime L}(\cA^{\lambda_1})$. Hence the corollary is proved.
\end{proof}

Therefore, $\cE^{\prime L}(\cA^{\lambda})$ is monotonically non-decreasing with decreasing value of $\lambda$ or equivalently $\cE^{\prime L}$ is monotonically non-increasing with increasing value of $\lambda$.

 \subsection{Relation of $\cE^L(\cA)$ and $\cE^{\prime L}(\cA)$ with $F_{\rho}(\cA)$}\label{subsec:F-EL}
 In this subsection, we relate the approach given in the Ref. \cite{Luo-u} (also briefly discussed in Sec. \ref{sub:prelim:uu}) with our apporach. More specifically, we relate $F_{\rho}(\cA)$ with $\cE^L(\cA)$ and $\cE^{\prime L}(\cA)$. From the expression of $F_{\rho}(\cA)$ i.e., from the equation \eqref{eq:F}, we get that
 
 \begin{align}
 \tr[F_{\rho}(\cA)]=&\sum_i[F_{\rho}(\cA)]_{ii}\nonumber\\
 =&\sum_i[\tr[\rho A_i]-\tr[\rho\frac{(A_iA_i+A_iA_i)}{2}]]\nonumber\\
 =&\sum_i[\tr[\rho A_i]-\tr[\rho A^2_i]]\nonumber\\
 =&\tr[\rho(\Id- \sum_iA^2_i)]\nonumber\\
 =&\tr[\rho (\Id-E^{\cA})]\nonumber\\
 =&\cE^L(\rho;\cA)\label{eq:f-el}
 \end{align}

Therefore,

\begin{align}
\cE^L(\cA)&=\max_{\rho} \cE^L(\rho;\cA)\nonumber\\
&=\max_{\rho}\tr[F_{\rho}(\cA)].
\end{align} 

Now as it is mentioned in the Ref. \cite{Luo-u} that $F_{\rho}(\cA)$ is Hermitian and $F_{\rho}(\cA)\geq 0$ for any arbitrary observable $\cA$, we have $\tr[F_{\rho}(\cA)]=\|F_{\rho}(\cA)\|_{tr}$.

\begin{align}
\cE^L(\cA)&=\max_{\rho}\|F_{\rho}(\cA)\|_{tr}.
\end{align}
 
\emph{Therefore, through our approach one of the operational meanings of the matrix $F_{\rho}(\cA)$ can be understood.}

Now taking $\rho=\frac{\Id}{d}$ and from equation \eqref{eq:f-el}, we get that

\begin{align}
\tr[F_{\frac{\Id}{d}}(\cA)]=&\cE^L(\frac{\Id}{d};\cA)\nonumber\\
=&\cE^{\prime L}(\cA).
\end{align}

As  $\tr[F_{\frac{\Id}{d}}(\cA)]=\|F_{\frac{\Id}{d}}(\cA)\|_{tr}$, we have

\begin{align}
\cE^{\prime L}(\cA)=\|F_{\frac{\Id}{d}}(\cA)\|_{tr}.
\end{align}

Now it has been mentioned in the Ref. \cite{Luo-u} that \emph{any unitarily invariant norm} of $\cF(\cA)=F_{\frac{\Id}{d}}(\cA)$ can quantify of the unsharpness of $\cA$. Therefore, we can take trace norm of $\cF(\cA)$ as a quantifier of the unsharpness of $\cA$ \cite{Chan}. Therefore, $\cE^{\prime L}$ measure is consistent with the Ref. \cite{Luo-u}.
 
\section{An attempt to construct instrument-independent unsharpness measures}\label{sec:E}
\subsection{Construction and the upper bound of the instrument-independent unsharpness measure $\cE$}\label{subsec:E-c-b}
In the previous section, we discussed two Luder's instrument-based unsharpness measures of observables. This discussion raises an immediate question: can one construct an instrument-independent unsharpness measure of observables? We try to answer this question in this section.

 Suppose Alice is using a general $\cA$-compatible quantum instrument $\cI^{\cA}=\{\Phi^{\cA}_i\}$ on the state $\rho$ to measure an observable $\cA=\{A_i\}$. Then, $q_i=\tr[\Phi^{\cA}_i(\rho)]=\tr[\rho A_i]$ is the probability of getting the outcome $i$ and $\rho^{\prime}_i=\frac{\Phi^{\cA}_i(\rho)}{\tr[\Phi^{\cA}_i(\rho)]}$ is the post-measurement after obtaining the outcome $i$. Now, after obtaining the outcome $i$ if one more time the observable $\cA$ is measured by Alice on this post measurement state $\rho^{\prime}_i$, the probability of again obtaining the same outcome $i$ is
 
\begin{align}
q_{ii}=\tr[\rho^{\prime}_iA_i].
\end{align}
The average probability that the outcome will repeat on successive measurements of the observable $\cA$ using the instrument $\cI^{\cA}$ on the state $\rho$ is 
\begin{align}
\cQ(\rho;\cA;\cI^{\cA})&=\sum_iq_iq_{ii}\nonumber\\
&=\sum_i\tr[\rho A_i]\tr[\rho^{\prime}_iA_i]\nonumber\\
&\leq\sum_i\tr[\rho A_i]\|A_i\|\nonumber\\
&=\tr[\rho \cX^{\cA}]
\end{align} 
where $\cX^{\cA}=\sum_i\|A_i\|A_i$. We will call $\cX^{\cA}$ as the $X$-matrix of the observable $\cA$. 

Therefore,

\begin{align}
\cQ(\rho;\cA)=\max_{\cI^{\cA}}\cQ(\rho;\cA;\cI^{\cA})
&\leq\tr[\rho \cX^{\cA}].\label{eq:xhigh}
\end{align}

Now the average probability that a outcome will never repeat is
\begin{align}
\cE(\rho;\cA;\cI^{\cA})&=1-\cQ(\rho;\cA;\cI^{\cA}).
\end{align}

 Now suppose, $a_{max}$ is the highest eigenvalue of the matrix $A_i$ and $\ket{a_{max}}$ is the corresponding eigen vector. Therefore, $\bra{a_{max}}A_i\ket{a_{max}}=\tr[\ket{a_{max}}\bra{a_{max}}A_i]=\|A_i\|$. Now consider an instrument $\cJ^{\cA}=\{\Theta^{\cA}_i\}$ where $\Theta^{\cA}_i(\rho)=\tr[\rho A_i]\ket{a_{max}}\bra{a_{max}}$. Therefore, the post-measurement states after obtaining the outcome $i$ is $\sigma_i=\ket{a_{max}}\bra{a_{max}}$. Now,

\begin{align}
\cQ(\rho;\cA;\cJ^{\cA})&=\sum_i\tr[\rho A_i]\tr[\ket{a_{max}}\bra{a_{max}}(A_i)]\nonumber\\
&=\sum_i\tr[\rho A_i]\|A_i\|\nonumber\\
&=\tr[\rho \cX^{\cA}].\label{eq:jeq}
\end{align}

Now,

\begin{align}
\cQ(\rho;\cA)&=\max_{\cI^{\cA}}\cQ(\rho;\cA;\cI^{\cA})\nonumber\\
&\geq\cQ(\rho;\cA;\cJ^{\cA})\nonumber\\
&=\tr[\rho \cX^{\cA}].\label{eq:xlow}
\end{align}

From  inequality \eqref{eq:xhigh}, equation \eqref{eq:jeq} and inequality \eqref{eq:xlow} we get

 \begin{align}
\cQ(\rho;\cA)=\tr[\rho \cX^{\cA}]=\cQ(\rho;\cA;\cJ^{\cA}).\label{eq:xq}
\end{align}
Therefore, choosing the best instrument $\cJ^{\cA}$, one can maximize the average probability that the outcome will repeat on successive measurements of the observable $\cA$ on the state $\rho$.

Now,

\begin{align}
\cE(\rho;\cA)&=\min_{\cI^{\cA}}\cE(\rho;\cA;\cI^{\cA})\nonumber\\
&=1-\max_{\cI^{\cA}}\cQ(\rho;\cA;\cI^{\cA})\nonumber\\
&=1-\tr[\rho\cX^{\cA}]\nonumber\\
&=\tr[\rho(\Id-\cX^{\cA})]\label{eq:e_rho_a}\\
&\leq\|\Id-\cX^{\cA}\|.
\end{align}

Therefore,

\begin{align}
\cE(\cA)&=\max_{\rho}\cE(\rho;\cA)\nonumber\\
&\leq\|\Id-\cX^{\cA}\|.\label{eq:elow}
\end{align}

Now suppose $x^{\prime}_{max}$ is the highest eigen value of $\Id-\cX^{\cA}$ and $\ket{x^{\prime}_{max}}$ is the corresponding eigen vector. Therefore, $\bra{x^{\prime}_{max}}\Id-\cX^{\cA}\ket{x^{\prime}_{max}}=\tr[\ket{x^{\prime}_{max}}\bra{x^{\prime}_{max}}(\Id-\cX^{\cA})]=\|\Id-\cX^{\cA}\|$. Then,

\begin{align}
\cE(\cA)&=\max_{\rho}\cE(\rho;\cA)\nonumber\\
&\geq \cE(\ket{x^{\prime}_{max}}\bra{x^{\prime}_{max}};\cA)\nonumber\\
&=\|\Id-\cX^{\cA}\|.\label{eq:ehigh}
\end{align}

From  inequality \eqref{eq:elow} and inequality \eqref{eq:ehigh} we get

\begin{align}
\cE(\cA)=\|\Id-\cX^{\cA}\|.\label{eq:e}
\end{align}

We define $\cE^{\cA}$ as the instrument-independent unsharpness measure of observables. Clearly, $\cE$ measure is independent of the bijective relabeling of outcomes and of the values of outcomes. If $\cA$ is a PVM, $\|A_i\|=1$ for all $i\in\{1,....,n_{\cA}\}$ and $\cE^{\cA}=0$. Next, we provide a remark on the faithfulness of $\cE$.

\begin{remark}
For an observable $\cA=\{A_i\}^{n_{\cA}}_{i=1}$ acting on the $d$-dimensional Hilbert space $\cH$, we know that $\cE(\cA)= 0$ only if
$\|A_i\|=1$ for all $i\in\Omega_A$. Let $\cA$ be an observable such that $\|A_i\|=1$ for all $i\in\Omega_{\cA}$ and $\ket{a^{max}_i}$ be the eigenstate (one of the eigen states if the maximum eigen value $1$ is degenerate) corresponding to the maximum eigen value $1$ for all $i\in\Omega_{\cA}$. Then for any two $i,j\in\Omega_{\cA}$ and $i\neq j$, suppose that $\braket{a^{max}_j|a^{max}_i}\neq 0$. Then $\bra{a^{max}_i}A_i+A_j\ket{a^{max}_i}\geq 1+\mid\braket{a^{max}_j|a^{max}_i}\mid^2> 1$. But Since, $A_i+A_j-\Id$, $\bra{\psi}A_i+A_j\ket{\psi}\leq 1$ for all $\ket{\psi}\bra{\psi}\in\cS(\cH)$. Hence, $\braket{a^{max}_j|a^{max}_i}= 0$ for all $i\in\Omega_{\cA}$. Now as $\sum_k A_k=\Id$ and $\bra{a^{max}_i}A_i\ket{a^{max}_i}=1$, we have $\bra{a^{max}_i}A_j\ket{a^{max}_i}=0$ for for any two $i,j\in\Omega_{\cA}$ and $i\neq j$. Therefore, for all $j\in\Omega_{\cA}$, there exist a $n_{\cA}-1$-dimensional subspace $\cK_{j}$ of $\cH$ such that for all $\ket{\psi}\bra{\psi}\in\cK_{j}$, $\bra{\psi}A_j\ket{\psi}=0$ and for $1$-dimensional subspace (i.e., for $\ket{a^{max}_j}$), $\bra{a^{max}_j}A_j\ket{a^{max}_j}=1$. Clearly, such construction is not possible for $n_{\cA}>d$ and for $n_{\cA}=d$, such construction implies $\cA=\{A_i=\ket{a^{max}_j}\bra{a^{max}_j}\}$ is a rank-$1$ PVM. Therefore, for $n_{\cA}\geq d$, $\cE(\cA)= 0$ implies $\cA$ is a PVM (sharp observable). Hence, the measure is faithful. \emph{ Above statement implies that $\cE$ is a faithful measure for all qubit observables (i.e., for $d=2$)}. For $n\leq d$, $\cE(\cA)> 0$ implise $\cA$ is an unsharp observable. But in this case $\cE(\cA)= 0$ does not implies $\cA$ is a PVM. For example- The qutrit observable $\cA^{\prime}=\{(\ket{1}\bra{1}+\frac{1}{2}\ket{2}\bra{2}), (\frac{1}{2}\ket{2}\bra{2}+\ket{3}\bra{3})\}$ is an unsharp observable. But $\cE(\cA^{\prime})=0$. \label{re:faith-E}
\end{remark}

Next we calculate the upper bound of $\cE$.

\begin{lemma}
For an observable $\cA=\{A_i\}^{n_{\cA}}_i$, $\cE(\cA)\leq 1-\frac{1}{n_{\cA}}$. This bound is achieved by the observable $\cT^{n_{\cA}}=\{I^{n_{\cA}}_i=\frac{\Id}{n_{\cA}}\}^{n_{\cA}}_{i=1}$.
\end{lemma}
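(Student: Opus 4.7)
The plan is to mimic the structure of Lemma \ref{le:elb}, exploiting the closed form $\cE(\cA) = \|\Id - \cX^{\cA}\|$ from equation \eqref{eq:e}. First I would note that since $\|A_i\| \leq 1$ for every $i$ (because $A_i \leq \Id$), one has $\cX^{\cA} = \sum_i \|A_i\| A_i \leq \sum_i A_i = \Id$, so $\Id - \cX^{\cA}$ is positive semidefinite. Hence $\cE(\cA) = \|\Id - \cX^{\cA}\| = 1 - x_{\min}$, where $x_{\min}$ is the smallest eigenvalue of $\cX^{\cA}$. The task therefore reduces to showing $x_{\min} \geq 1/n_{\cA}$.

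Next I would pick a normalized eigenvector $\ket{v}$ of $\cX^{\cA}$ corresponding to $x_{\min}$ and set $y_i := \bra{v} A_i \ket{v}$. Then
\begin{align}
x_{\min} = \bra{v}\cX^{\cA}\ket{v} = \sum_i \|A_i\|\, y_i.
\end{align}
Two observations about $\{y_i\}$ carry the argument: (i) $y_i \geq 0$ and $\sum_i y_i = \bra{v}\bigl(\sum_i A_i\bigr)\ket{v} = 1$; and (ii) $y_i \leq \|A_i\|$ by the variational characterization of the operator norm. Substituting (ii) into the expression above gives $x_{\min} \geq \sum_i y_i^2$, and then the Lagrange-multiplier argument already used in Lemma \ref{le:elb} (the minimum of $\sum_i y_i^2$ subject to $\sum_i y_i = 1$ and $y_i \geq 0$ is attained at $y_i = 1/n_{\cA}$) yields $\sum_i y_i^2 \geq 1/n_{\cA}$. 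Combining, $\cE(\cA) = 1 - x_{\min} \leq 1 - 1/n_{\cA}$.

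For tightness I would just plug in the trivial observable: each $I^{n_\cA}_i = \Id/n_\cA$ has $\|I^{n_\cA}_i\| = 1/n_\cA$, so $\cX^{\cT^{n_\cA}} = n_\cA\cdot(1/n_\cA)(\Id/n_\cA) = \Id/n_\cA$, whence $\cE(\cT^{n_\cA}) = \|\Id - \Id/n_\cA\| = 1 - 1/n_\cA$, matching the bound.

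I do not anticipate a real obstacle. The only mildly non-routine step is the inequality $x_{\min} \geq \sum_i y_i^2$, which uses the \emph{upper} bound $y_i \leq \|A_i\|$ to convert the weights $\|A_i\|$ appearing in $\cX^{\cA}$ into the same quantities $y_i$ that are constrained by $\sum_i y_i = 1$; once this is done, the proof collapses onto the constrained minimization of $\sum_i y_i^2$ already handled in Lemma \ref{le:elb}.
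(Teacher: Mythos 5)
Your proof is correct, but it takes a different route from the paper's. The paper disposes of the upper bound in one line: since $\cE(\rho;\cA)=\min_{\cI^{\cA}}\cE(\rho;\cA;\cI^{\cA})$ is a minimum over all $\cA$-compatible instruments, it is bounded above by the value at the Luder's instrument, so $\cE(\cA)\leq\cE^L(\cA)\leq 1-\frac{1}{n_{\cA}}$ follows immediately from Lemma \ref{le:elb}; only the tightness computation on $\cT^{n_{\cA}}$ is done by hand (and there the two of you agree exactly). You instead work directly with the closed form $\cE(\cA)=\|\Id-\cX^{\cA}\|$ and rerun the Lemma \ref{le:elb} machinery on the $X$-matrix: the key new step is converting the weights $\|A_i\|$ into the diagonal values $y_i=\bra{v}A_i\ket{v}$ via $\|A_i\|\geq y_i\geq 0$, after which $x_{\min}\geq\sum_i y_i^2\geq 1/n_{\cA}$ follows from the same constrained minimization. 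Both arguments are sound; the paper's is shorter because it leans on the already-proved Luder's bound, while yours is self-contained at the level of $\cX^{\cA}$ and makes explicit that $\Id-\cX^{\cA}\geq 0$ (a fact the paper uses implicitly elsewhere, e.g.\ when writing $\cE(\cA)=1-x^{\cA}_{min}$ in Theorem \ref{th:con-E}), so it arguably documents the structure of the instrument-independent measure a bit more transparently.
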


\begin{proof}
From the definition of $\cE^{\rho, \cA}$, we have

\begin{align}
\cE(\rho; \cA)&=\min_{\cI^{\cA}}\cE^{\rho; \cA; \cI^{\cA}}\nonumber\\
&\leq \cE(\rho; \cA; \cL^{\cA})\label{eq:l_u}
\end{align}
Taking maximization over $\rho$ in both side of inequality \ref{eq:l_u} and from Lemma \ref{le:elb}, we get

\begin{align}
\cE^{\cA}\leq \cE^L(\cA)\leq 1-\frac{1}{n_{\cA}}.\label{eq:l<u_b}
\end{align}
Now for the observable  $\cT^{n_{\cA}}$,

\begin{align}
\cE(\cT^{n_{\cA}})&=\|\Id-\cX^{\cT^{n_{\cA}}}\|\nonumber\\
&=\|\Id-\sum_i\frac{\Id}{n^2_{\cA}}\|\nonumber\\
&=1-\frac{1}{n_{\cA}}
\end{align}
Hence, the lemma is proved.
\end{proof}

Similar to $\cE^{\prime L}$, we can define another instrument-independent unsharpness measure $\cE^{\prime}$ by taking average of $\cE(\rho;\cA)$ over full state space $\cS(\cH)$. Then

\begin{align}
\cE^{\prime}(\cA)=&<\cE(\rho;\cA)>_{\cS(\cH)}\nonumber\\
=&<\tr[\rho(\Id-\cX^{\cA})]>_{\cS(\cH)}\nonumber\\
=&\tr[<\rho>_{\cS(\cH)}(\Id-\cX^{\cA})]\nonumber\\
=&\tr[\frac{\Id}{d}(\Id-\cX^{\cA})]\nonumber\\
=&\cE(\frac{\Id}{d};\cA).\label{eq:def_eprm}
\end{align}

The statement similar to  Remark \ref{re:faith-E} also holds $\cE^{\prime}$.

Now the lemma below states the upper bound of $\cE^{\prime}(\cA)$.

\begin{lemma}
For an observable $\cA=\{A_i\}^{n_{\cA}}_i$, $\cE^{\prime}(\cA)\leq 1-\frac{1}{n_{\cA}}$. This bound is achieved by the observable $\cT^{n_{\cA}}=\{I^{n_{\cA}}_i=\frac{\Id}{n_{\cA}}\}^{n_{\cA}}_{i=1}$.\label{le:eprmb}
\end{lemma}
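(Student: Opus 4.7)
The plan is to mimic the structure of the proof of Lemma \ref{le:elprmb}: first derive the upper bound by comparing $\cE^{\prime}$ against the already-bounded quantity $\cE$, and then verify achievability by a direct computation on the uniform trivial observable $\cT^{n_{\cA}}$.

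For the upper bound, I would start from the definition in equation \eqref{eq:def_eprm}, namely $\cE^{\prime}(\cA) = \cE(\tfrac{\Id}{d};\cA)$. Since $\cE(\cA) = \max_\rho \cE(\rho;\cA)$, evaluating $\cE(\rho;\cA)$ at the particular state $\rho = \tfrac{\Id}{d}$ can only produce a value no larger than the maximum, so $\cE^{\prime}(\cA) \le \cE(\cA)$. Combining this with the bound $\cE(\cA) \le 1 - \tfrac{1}{n_{\cA}}$ proved in the preceding lemma gives the desired inequality. This is essentially the same one-line argument used for $\cE^{\prime L}$ in Lemma \ref{le:elprmb}.

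For achievability, I would plug $\cA = \cT^{n_{\cA}}$ directly into the formula $\cE^{\prime}(\cA) = \tr[\tfrac{\Id}{d}(\Id - \cX^{\cA})]$. Since each effect $I^{n_{\cA}}_i = \tfrac{\Id}{n_{\cA}}$ has operator norm $\tfrac{1}{n_{\cA}}$, the $X$-matrix evaluates to $\cX^{\cT^{n_{\cA}}} = \sum_i \|I^{n_{\cA}}_i\| I^{n_{\cA}}_i = n_{\cA}\cdot \tfrac{1}{n_{\cA}}\cdot \tfrac{\Id}{n_{\cA}} = \tfrac{\Id}{n_{\cA}}$, which gives $\cE^{\prime}(\cT^{n_{\cA}}) = \tr[\tfrac{\Id}{d}(\Id - \tfrac{\Id}{n_{\cA}})] = 1 - \tfrac{1}{n_{\cA}}$, saturating the bound.

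There is no real obstacle here, since both halves of the proof reduce to invocations of results already established in this section: the upper bound piggybacks on the preceding lemma for $\cE(\cA)$, and the achievability part is a straightforward trace calculation using the explicit form of $\cX^{\cT^{n_{\cA}}}$ already computed in the analogous proof for $\cE(\cT^{n_{\cA}})$. The only minor point to keep track of is that $\cE^{\prime}$ is defined only through the maximally mixed state, so one should avoid any unnecessary detour through optimization over $\rho$.
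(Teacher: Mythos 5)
Your proposal is correct and follows exactly the paper's own argument: the bound is obtained by comparing $\cE^{\prime}(\cA)=\cE(\tfrac{\Id}{d};\cA)$ with $\max_{\rho}\cE(\rho;\cA)=\cE(\cA)$ and invoking the preceding lemma, and the achievability is the same direct evaluation of $\cX^{\cT^{n_{\cA}}}=\tfrac{\Id}{n_{\cA}}$ that the paper leaves as "easy to check."
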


\begin{proof}
From the equation \eqref{eq:def_eprm}, we get

\begin{align}
\cE^{\prime}(\cA)=&\cE(\frac{\Id}{d};\cA)\nonumber\\
\leq& \max_{\rho}\cE(\rho;\cA)\nonumber\\
=&\cE(\cA)\nonumber\\
\leq& 1-\frac{1}{n_{\cA}}.
\end{align}

Now, it is easy to check that $\cE^{\prime}(\cT^{n_{\cA}})=1-\frac{1}{n_{\cA}}$.
\end{proof}

The statement similar to Remark \ref{r1} also holds for $\cE$ and $\cE^{\prime}$.
\subsection{Monotonicity of $\cE$ and $\cE^{\prime}$ under a class of fuzzifying processes}\label{subsec:e-mf}
Since from Example \ref{ex:postpross} and Example \ref{ex:convex}, we get that the coarse-graining and the convex combination of the observables are not the fuzzifying processes, monotonicity of $\cE$ can not be shown. Therefore, here we try to show that under the addition of white noise $\cE$ is monotonically non-decreasing. But unfortunately, it appears that the proof is not so straightforward. Therefore, at first, we derive the condition for the monotonicity of $\cE$ under the addition of white noise (i.e., Theorem \ref{th:con-E}). 

\begin{theorem} 
Suppose $\cA^{\lambda}=\{A^{\lambda}_i\}_{i=1}^{n_{\cA}}$ is an unsharp version of $\cA=\{A_i\}_{i=1}^{n_{\cA}}$ i.e., $A^{\lambda}_i=\lambda A_i+(1-\lambda)\frac{\Id}{n_{\cA}}$ for all $i\in\{1,.....,n_{\cA}\}$ where $1\geq\lambda\geq 0$. Then $\cE(\cA^{\lambda})\geq \cE(\cA)$ for all $1\geq\lambda\geq 0$ iff 

\begin{equation}
\Sigma^{\cA}_1\geq \Sigma^{\cA}_2\label{eq:con-E}
\end{equation}

 holds where  $\Sigma^{\cA}_1=(\bra{x^{\cA}_{min}}\cX^{\cA}\ket{x^{\cA}_{min}}-\frac{1}{n_{\cA}})$ and $\Sigma^{\cA}_2=(\frac{\sum_i\|A_i\|}{n_{\cA}}-\bra{x^{\cA}_{min}}\cX^{\cA}\ket{x^{\cA}_{min}})$ where $x^{\cA}_{min}$ is the lowest eigen value of $\cX^{\cA}$ and $\ket{x^{\cA}_{min}}$ is the eigen state of $\cX^{\cA}$ corresponding to the eigen value $x^{\cA}_{min}$.\label{th:con-E}
\end{theorem}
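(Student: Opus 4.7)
The plan is to obtain a closed-form expression for $\cE(\cA^{\lambda})$ by showing that $\cX^{\cA^{\lambda}}$ is just $\lambda^{2}\cX^{\cA}$ shifted by a scalar multiple of $\Id$, so both operators share their eigenbasis. First I would observe that since $A_{i}$ and $\Id/n_{\cA}$ commute, one has $\|A^{\lambda}_{i}\|=\lambda\|A_{i}\|+(1-\lambda)/n_{\cA}$. Expanding $\cX^{\cA^{\lambda}}=\sum_{i}\|A^{\lambda}_{i}\|A^{\lambda}_{i}$ and collecting terms using $\sum_{i}A_{i}=\Id$, with the shorthand $S=\sum_{i}\|A_{i}\|$, a routine calculation should yield
\begin{equation}
\cX^{\cA^{\lambda}}=\lambda^{2}\cX^{\cA}+\frac{(1-\lambda)(\lambda S+1)}{n_{\cA}}\Id.
\end{equation}

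Next I would exploit the fact that $\Id-\cX^{\cA^{\lambda}}$ equals $\lambda^{2}(\Id-\cX^{\cA})$ plus a scalar multiple of $\Id$, so the two operators share eigenvectors and the maximum eigenvalue of $\Id-\cX^{\cA^{\lambda}}$ is attained on $\ket{x^{\cA}_{min}}$. Since $\|A_{i}\|\leq 1$ implies $\cX^{\cA}\leq\Id$ and thus $\cE(\cA)=1-x^{\cA}_{min}$, evaluating on this eigenstate would give $\cE(\cA^{\lambda})=\lambda^{2}\cE(\cA)+(1-\lambda^{2})-(1-\lambda)(\lambda S+1)/n_{\cA}$. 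Subtracting $\cE(\cA)$, factoring out $(1-\lambda)$, and using $x^{\cA}_{min}=\bra{x^{\cA}_{min}}\cX^{\cA}\ket{x^{\cA}_{min}}$, the difference simplifies to
\begin{equation}
\cE(\cA^{\lambda})-\cE(\cA)=(1-\lambda)\bigl[\Sigma^{\cA}_{1}-\lambda\Sigma^{\cA}_{2}\bigr].
\end{equation}

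Finally, since the bracket is linear in $\lambda$ and the prefactor $(1-\lambda)$ is non-negative on $[0,1]$, monotonicity $\cE(\cA^{\lambda})\geq\cE(\cA)$ on the whole interval is equivalent to non-negativity of the bracket at both endpoints $\lambda=0$ (giving $\Sigma^{\cA}_{1}\geq 0$) and $\lambda=1$ (giving $\Sigma^{\cA}_{1}\geq\Sigma^{\cA}_{2}$). The endpoint at $\lambda=0$ is automatic: setting $p_{i}=\bra{x^{\cA}_{min}}A_{i}\ket{x^{\cA}_{min}}$, one has $\|A_{i}\|\geq p_{i}$ and $\sum_{i}p_{i}=1$, so by the same Lagrange multiplier argument used in the proof of Lemma \ref{le:elb}, $x^{\cA}_{min}=\sum_{i}\|A_{i}\|p_{i}\geq\sum_{i}p_{i}^{2}\geq 1/n_{\cA}$, i.e., $\Sigma^{\cA}_{1}\geq 0$. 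Hence only the $\lambda=1$ constraint is non-trivial, yielding the stated iff.

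The main obstacle is essentially just the bookkeeping in the first step, where several cross-terms must be combined carefully to land on the clean decomposition $\cX^{\cA^{\lambda}}=\lambda^{2}\cX^{\cA}+c(\lambda)\Id$. Once that identity is in hand, the computation of $\cE(\cA^{\lambda})$ collapses to an eigenvalue shift, and the iff reduces to a one-variable endpoint check on a linear function of $\lambda$.
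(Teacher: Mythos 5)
Your proposal is correct and follows essentially the same route as the paper: the same decomposition $\cX^{\cA^{\lambda}}=\lambda^{2}\cX^{\cA}+c(\lambda)\Id$, the same factorization $\cE(\cA^{\lambda})-\cE(\cA)=(1-\lambda)\bigl[\Sigma^{\cA}_{1}-\lambda\Sigma^{\cA}_{2}\bigr]$, and the same reduction to the sign of the linear bracket at the endpoints of $[0,1]$. The only (cosmetic) difference is that the paper obtains $\Sigma^{\cA}_{1}\geq 0$ by citing its upper bound $\cE(\cA)\leq 1-\tfrac{1}{n_{\cA}}$, whereas you re-derive it directly via the Lagrange-multiplier estimate.
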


\begin{proof}
The $X$-matrix of $\cA^{\lambda}$ is 
\begin{align}
\cX^{\cA^{\lambda}}=&\sum_i\|\lambda A_i+(1-\lambda)\frac{\Id}{n_{\cA}}\|[\lambda A_i+(1-\lambda)\frac{\Id}{n_{\cA}}]\nonumber\\
=&\sum_i[\lambda \|A_i\|+(1-\lambda)\frac{1}{n_{\cA}}][\lambda A_i+(1-\lambda)\frac{\Id}{n_{\cA}}]\nonumber\\
=&\lambda^2\sum_i\|A_i\|A_i+\frac{\lambda(1-\lambda)}{n_{\cA}}[(\sum_i\|A_i\|)\Id+\sum_iA_i]\nonumber\\
&+\frac{(1-\lambda)^2}{n_{\cA}}\Id\nonumber\\
=&\lambda^2\cX^{\cA}+\frac{(1-\lambda)}{n_{\cA}}[\lambda(\sum_i\|A_i\|)+1]\Id
\end{align}

Therefore,
\begin{align}
\Id-\cX^{\cA^{\lambda}}=&\lambda^2[\Id-\cX^{\cA}]+(1-\lambda^2)\Id\nonumber\\
&-\frac{(1-\lambda)}{n_{\cA}}[\lambda(\sum_i\|A_i\|)+1]\Id\nonumber\\
=&\frac{(1-\lambda)}{n_{\cA}}[(n_{\cA}-1)+\lambda(n_{\cA}-\sum_i\|A_i\|)]\Id\nonumber\\
&+\lambda^2[\Id-\cX^{\cA}]\nonumber\\
=&\lambda^2[\Id-\cX^{\cA}]+\gamma\Id\label{eq:x_gamma}
\end{align}
where $\gamma=\gamma(\cA,\lambda)=\frac{(1-\lambda)}{n_{\cA}}[(n_{\cA}-1)+\lambda(n_{\cA}-\sum_i\|A_i\|)]$. As $A_i\leq\Id$ and therefore, $\sum_i\|A_i\|\leq n_{\cA}$, we have $\gamma\geq 0$. Therefore,

\begin{align}
\cE(\cA^{\lambda})=&\|\Id-\cX^{\cA^{\lambda}}\|\nonumber\\
=&\lambda^2\cE(\cA)+\gamma.
\end{align}

Now,
\begin{align}
\cE(\cA^{\lambda})-\cE(\cA)=&\gamma-(1-\lambda^2)\cE(\cA)\nonumber\\
=&\frac{(1-\lambda)}{n_{\cA}}[(n_{\cA}-1)+\lambda(n_{\cA}-\sum_i\|A_i\|)]\nonumber\\
&-(1-\lambda^2)\cE(\cA)\nonumber\\
=&(1-\lambda)[(1-\frac{1}{n_{\cA}}-\cE(\cA))\nonumber\\
&+\lambda(1-\frac{\sum_i\|A_i\|}{n_{\cA}}-\cE(\cA))]
\end{align}

Now, $\cE(\cA)=\|\Id-\cX^{\cA}\|=1-x^{\cA}_{min}=1-\bra{x^{\cA}_{min}}\cX^{\cA}\ket{x^{\cA}_{min}}$ where $x^{\cA}_{min}$ is the lowest eigen value of $\cX^{\cA}$ and $\ket{x^{\cA}_{min}}$ is the eigen state of $\cX^{\cA}$ corresponding to the eigen value $x^{\cA}_{min}$. Then

Therefore,
\begin{align}
\cE(\cA^{\lambda})-\cE(\cA)=&(1-\lambda)[(\bra{x^{\cA}_{min}}\cX^{\cA}\ket{x^{\cA}_{min}}-\frac{1}{n_{\cA}})\nonumber\\
&+\lambda(\bra{x^{\cA}_{min}}\cX^{\cA}\ket{x^{\cA}_{min}}-\frac{\sum_i\|A_i\|}{n_{\cA}})]\nonumber\\
&=(1-\lambda)[\Sigma^{\cA}_1-\lambda\Sigma^{\cA}_2]\nonumber\\
&=(1-\lambda)\Sigma^{\cA}(\lambda)
\end{align}
where $\Sigma^{\cA}_1=(\bra{x^{\cA}_{min}}\cX^{\cA}\ket{x^{\cA}_{min}}-\frac{1}{n_{\cA}})=x^{\cA}_{min}-\frac{1}{n_{\cA}}$, $\Sigma^{\cA}_2=(\frac{\sum_i\|A_i\|}{n_{\cA}}-\bra{x^{\cA}_{min}}\cX^{\cA}\ket{x^{\cA}_{min}})=\frac{\sum_i\|A_i\|}{n_{\cA}}-x^{\cA}_{min}$ and $\Sigma^{\cA}(\lambda)=[\Sigma^{\cA}_1-\lambda\Sigma^{\cA}_2]$.
Now, since $\cE(\cA)\leq(1-\frac{1}{n_{\cA}})$, $\Sigma^{\cA}_1\geq 0$. Now, There are two following cases-\\
\textbf{(I) For $\Sigma^{\cA}_2<0$ -}\\
In this case, $\cE(\cA^{\lambda})-\cE(\cA)\geq 0$ always. In this case $\Sigma^{\cA}_1\geq \Sigma^{\cA}_2$ trivially holds.\\
\textbf{(II) For $\Sigma^{\cA}_2\geq 0$ -}\\
In this case,  the minimum value of $\Sigma^{\cA}(\lambda)$ (for $\lambda=1$) is $\Sigma^{\cA}_{min}=[\Sigma^{\cA}_1-\Sigma^{\cA}_2]=2x^{\cA}_{min}-\frac{\sum_i\|A_i\|}{n_{\cA}}-\frac{1}{n_{\cA}}$. Clearly, the condition for $\cE(\cA^{\lambda})-\cE(\cA)\geq 0$ for all $\lambda$ is $\Sigma^{\cA}_{min}\geq 0$ or equivalently $\Sigma^{\cA}_1\geq \Sigma^{\cA}_2$.\\
\end{proof}

It appears that the proof of the inequality \eqref{eq:con-E} for arbitray observable acting on an arbitrary dimensional Hibert space, is difficult and therefore proof of the statement that under the addition of white noise $\cE$ is monotonically non-decreasing is difficult. Therefore, next we prove the inequality \eqref{eq:con-E} for the qubit dichotomic observables.

\begin{proposition}
For any dichotomic observable $\cW$, $\Sigma^{\cW}_1\geq \Sigma^{\cW}_2$ and therefore, $\cE(\cW^{\lambda})\geq \cE(\cW)$ for all $1 \geq\lambda\geq 0$.\label{th:qubit-di-E}
\end{proposition}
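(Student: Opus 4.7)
The plan is to exploit the fact that a qubit dichotomic POVM $\cW = \{W_1, W_2\}$ with $W_1 + W_2 = \Id$ consists of two simultaneously diagonalizable operators, so the whole question collapses onto a two-parameter algebraic inequality that can be verified directly. First I would pick the common eigenbasis, ordered so that $W_1 = \mathrm{diag}(a, b)$ and $W_2 = \mathrm{diag}(1-a, 1-b)$ with $0 \leq b \leq a \leq 1$ (the ordering $a \geq b$ is WLOG by a basis permutation). Then $\|W_1\| = a$ and $\|W_2\| = 1 - b$, and the $X$-matrix becomes
\begin{equation}
\cX^{\cW} = \mathrm{diag}\bigl(a^2 + (1-a)(1-b),\; ab + (1-b)^2\bigr).
\end{equation}
The sign of the difference of the two diagonal entries works out to $(a-b)(a+b-1)$, so the identity of $x^{\cW}_{\min}$ flips at the line $a + b = 1$.

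Next, using $n_{\cW} = 2$ and $\|W_1\| + \|W_2\| = a + 1 - b$, I would rewrite the target inequality $\Sigma^{\cW}_1 \geq \Sigma^{\cW}_2$ in the clean form
\begin{equation}
4\, x^{\cW}_{\min} \;\geq\; 2 + a - b.
\end{equation}
In the regime $a + b \geq 1$ one has $x^{\cW}_{\min} = ab + (1-b)^2$, and the inequality reduces to $a(4b-1) + 4b^2 - 7b + 2 \geq 0$, a linear function of $a$ on the admissible interval $a \in [\max(b, 1-b),\, 1]$. Evaluating at the three candidate endpoints gives the expressions $1-2b$, $2(2b-1)^2$, and $4b^2 - 3b + 1$ (the last with negative discriminant), each non-negative on its respective sub-region. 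In the regime $a + b \leq 1$ one has $x^{\cW}_{\min} = a^2 + (1-a)(1-b)$, and the inequality reduces to $4a^2 + 4ab - 5a - 3b + 2 \geq 0$, which is monotone in $b$ with slope $4a - 3$; checking the boundary values $b \in \{0,\, a,\, 1-a\}$ yields $4a^2 - 5a + 2$, $2(2a-1)^2$, and $2a - 1$, again non-negative on the appropriate sub-region.

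With $\Sigma^{\cW}_1 \geq \Sigma^{\cW}_2$ established, the monotonicity $\cE(\cW^{\lambda}) \geq \cE(\cW)$ follows immediately from Theorem~\ref{th:con-E}. The only real obstacle is bookkeeping rather than ideas: the identity of $x^{\cW}_{\min}$, which endpoint governs the linear-in-$a$ profile, and the sign of the slope of the linear-in-$b$ profile all toggle at the same handful of thresholds ($a + b$ versus $1$, $a$ versus $1/2$, $b$ versus $1/2$), so tabulating every sub-case in a single pass is the cleanest way to make sure no region is missed. I do not expect any genuinely new idea beyond the simultaneous diagonalization trick, which is precisely what fails for dichotomic POVMs in dimension $\geq 3$ and is the reason the general statement seems hard.
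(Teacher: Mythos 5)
Your proposal is correct, and it follows the paper's reduction exactly up to the last step: same simultaneous diagonalization of $W_1$ and $W_2=\Id-W_1$, same identification $\|W_1\|=a$, $\|W_2\|=1-b$, the same two diagonal entries of $\cX^{\cW}$, and the same case split at $a+b=1$ governed by the sign of $(a-b)(a+b-1)$. Where you genuinely diverge is in how the resulting two-variable inequality $4x^{\cW}_{\min}\geq 2+a-b$ is certified: the paper stops at the expressions for $\Sigma^{\cW}_{\min}$ and appeals to 3D plots (Figs.~\ref{fig:sigma1} and \ref{fig:sigma2}) to conclude non-negativity, whereas you observe that each case is affine in one of the variables and reduce everything to endpoint evaluations ($1-2b$, $2(2b-1)^2$, $4b^2-3b+1$ with negative discriminant in one regime; $4a^2-5a+2$, $2(2a-1)^2$, $2a-1$ in the other), each manifestly non-negative on its sub-region. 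I checked the algebra and it is right; since an affine function non-negative at both endpoints of an interval is non-negative throughout, your argument is a complete analytic proof and strictly improves on the paper's graphical verification. One side remark in your closing sentence is off: simultaneous diagonalizability does not fail for dichotomic POVMs in dimension $d\geq 3$ (any effect commutes with $\Id-$itself); what breaks there is that $\cX^{\cW}$ acquires $d>2$ eigenvalues and the identity of $x^{\cW}_{\min}$ and the endpoint bookkeeping no longer reduce to a two-parameter problem. This does not affect the validity of your qubit proof.
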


\begin{proof}
Suppose $\cW=\{W_1, W_2\}$ are two qubit dichotomic observables. Clearly $W_2=\Id-W_1$. Let $W_1=\omega_1\ket{\omega_1}\bra{\omega_1}+\omega_2\ket{\omega_2}\bra{\omega_2}$. Without the loss of generality, we can choose $\omega_1\geq \omega_2$. Then $\|W_1\|=\omega_1$. Now $W_2=(1-\omega_1)\ket{\omega_1}\bra{\omega_1}+(1-\omega_2)\ket{\omega_2}\bra{\omega_2}$. Clearly, $\|W_2\|=(1-\omega_2)$. Therefore,

\begin{align}
\cX^{\cW}=&\|W_1\|W_1+\|W_2\|W_2\nonumber\\
=&\omega_1[\omega_1\ket{\omega_1}\bra{\omega_1}+\omega_2\ket{\omega_2}\bra{\omega_2})]\nonumber\\
&+(1-\omega_2)[(1-\omega_1)\ket{\omega_1}\bra{\omega_1}+(1-\omega_2)\ket{\omega_2}\bra{\omega_2}]\nonumber\\
=&[\omega^2_1+(1-\omega_1)(1-\omega_2)]\ket{\omega_1}\bra{\omega_1}\nonumber\\
&+[\omega_1\omega_2+(1-\omega_2)^2]\ket{\omega_2}\bra{\omega_2}\nonumber\\
=&\omega^{\prime}_1\ket{\omega_1}\bra{\omega_1}+\omega^{\prime}_2\ket{\omega_2}\bra{\omega_2}\label{eq:wx}
\end{align}
where $\omega^{\prime}_1=[\omega^2_1+(1-\omega_1)(1-\omega_2)]$ and $\omega^{\prime}_2=[\omega_1\omega_2+(1-\omega_2)^2]$. Now 

\begin{align}
\omega^{\prime}_1-\omega^{\prime}_2=&[\omega^2_1+(1-\omega_1)(1-\omega_2)]-[\omega_1\omega_2+(1-\omega_2)^2]\nonumber\\
=&\omega^2_1+1+\omega_1\omega_2-\omega_1-\omega_2-\omega_1\omega_2-1+2\omega_2-\omega^2_2\nonumber\\
=&(\omega_1-\omega_2)[(\omega_1+\omega_2)-1].
\end{align}
Therefore, as $\omega_1\geq\omega_2$, we have $\omega^{\prime}_1\geq \omega^{\prime}_2$ for $(\omega_1+\omega_2)\geq 1$ and we have $\omega^{\prime}_1\leq \omega^{\prime}_2$ for $(\omega_1+\omega_2)\leq 1$.

Therefore, following two cases-\\
\textbf{(I) For $(\omega_1+\omega_2)\geq 1$ -}\\
In this case the minimum eigen value of $\cX^{\cW}$ is $x^{\cW}_{min}=\omega^{\prime}_2$. Therefore,

\begin{align}
\Sigma^{\cW}_{min}=&\Sigma^{\cW}_1-\Sigma^{\cW}_2\nonumber\\
=&2x^{\cW}_{min}-\frac{1}{2}-\frac{\|W_1\|+\|W_2\|}{2}\nonumber\\
=&2[\omega_1\omega_2+(1-\omega_2)^2]-\frac{1}{2}-\frac{\omega_1+(1-\omega_2)}{2}\nonumber\\
=&1-4\omega_2+2\omega^2_2+2\omega_1\omega_2-\frac{(\omega_1-\omega_2)}{2}.
\end{align}

Now from Fig. \ref{fig:sigma1}, we get that that $\Sigma^{\cW}_{min}\geq 0$ for all $\omega_1$ and $\omega_2$ satisfying the conditions $\omega_1\geq\omega_2$ and $\omega_1+\omega_2\geq 1$.

\textbf{(II) For $(\omega_1+\omega_2)\leq 1$ -}\\
In this case the minimum eigen value of $\cX^{\cW}$ is $x^{\cW}_{min}=\omega^{\prime}_1$. Therefore,

\begin{align}
\Sigma^{\cW}_{min}=&\Sigma^{\cW}_1-\Sigma^{\cW}_2\nonumber\\
=&2x^{\cW}_{min}-\frac{1}{2}-\frac{\|W_1\|+\|W_2\|}{2}\nonumber\\
=&2[\omega^2_1+(1-\omega_1)(1-\omega_2)]-\frac{1}{2}-\frac{\omega_1+(1-\omega_2)}{2}\nonumber\\
=&1+2\omega^2_1+2\omega_1\omega_2-2(\omega_1+\omega_2)-\frac{(\omega_1-\omega_2)}{2}.
\end{align}

Now from Fig. \ref{fig:sigma2}, we get that that $\Sigma^{\cW}_{min}\geq 0$ for all $\omega_1$ and $\omega_2$ satisfying the conditions $\omega_1\geq\omega_2$ and $\omega_1+\omega_2< 1$.

\begin{figure}
\begin{subfigure}{0.5\textwidth}
\includegraphics[width=7.9cm,height=8.1cm]{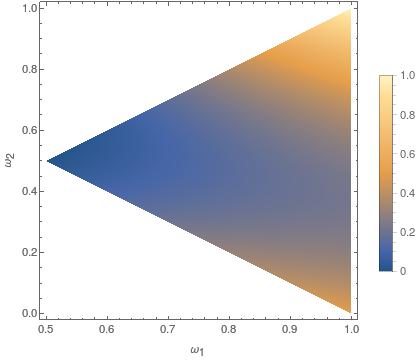}
\caption{Plot of $\Sigma^{\cW}_{min}$ w.r.t. $\omega_1$ and $\omega_2$ satisfying the conditions $\omega_1\geq\omega_2$ and $\omega_1+\omega_2\geq 1$ }\label{fig:sigma1}
\end{subfigure}
\begin{subfigure}{0.5\textwidth}
\includegraphics[width=7.9cm,height=8.1cm]{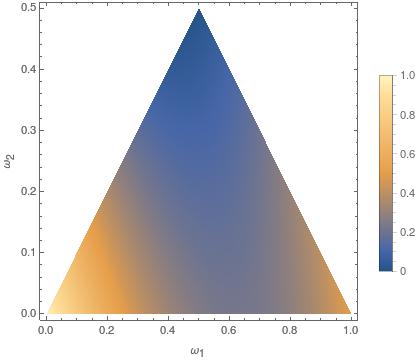}
\caption{Plot of $\Sigma^{\cW}_{min}$ w.r.t. $\omega_1$ and $\omega_2$ satisfying the conditions $\omega_1\geq\omega_2$ and $\omega_1+\omega_2< 1$}\label{fig:sigma2}
\end{subfigure}
\caption{Plots of $\Sigma^{\cW}_{min}$ w.r.t. $\omega_1$ and $\omega_2$ for $\omega_1\geq\omega_2$. These plots show that $\Sigma^{\cW}_{min}\geq 0$ always. }\label{aa}
\end{figure}

\end{proof}

Now, we have to prove monotonicity of $\cE^{\prime}$ under the addition of white noise. We start with our next theorem.

\begin{theorem} 
Suppose $\cA^{\lambda}=\{A^{\lambda}_i\}_{i=1}^{n_{\cA}}$ is an unsharp version of $\cA=\{A_i\}_{i=1}^{n_{\cA}}$ i.e., $A^{\lambda}_i=\lambda A_i+(1-\lambda)\frac{\Id}{n_{\cA}}$ for all $i\in\{1,.....,n_{\cA}\}$ where $1\geq\lambda\geq 0$. Then $\cE^{\prime}(\cA^{\lambda})\geq \cE^{\prime}(\cA)$ for all $1\geq\lambda\geq 0$ iff 

\begin{equation}
\Sigma^{\prime\cA}_1\geq \Sigma^{\prime\cA}_2\label{eq:con-Eprm}
\end{equation}

 holds where  $\Sigma^{\prime\cA}_1=\frac{1}{d}\tr[\cX^{\cA}]-\frac{1}{n}$ and $\Sigma^{\prime\cA}_2=\frac{\sum_i\|A_i\|)}{n_{\cA}}-\frac{1}{d}\tr[\cX^{\cA}]$.\label{th:con-Eprm}
\end{theorem}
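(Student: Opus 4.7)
The plan is to follow the same strategy used for Theorem \ref{th:con-E}, with the simplification that $\cE'$ depends on $\Id-\cX^{\cA}$ only through its normalised trace rather than through the operator norm. In particular, the entire case-by-case analysis coming from the eigenvector structure of $\cX^{\cA}$ collapses to a single linear computation.

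First I would invoke equation \eqref{eq:x_gamma}, which was derived in the proof of Theorem \ref{th:con-E} with no use of the norm and hence is available here:
\begin{equation}
\Id-\cX^{\cA^{\lambda}}=\lambda^{2}\bigl[\Id-\cX^{\cA}\bigr]+\gamma\,\Id,
\end{equation}
where $\gamma=\frac{(1-\lambda)}{n_{\cA}}\bigl[(n_{\cA}-1)+\lambda(n_{\cA}-\sum_{i}\|A_{i}\|)\bigr]\geq 0$. Applying the definition \eqref{eq:def_eprm} together with the linearity of the trace gives
\begin{equation}
\cE'(\cA^{\lambda})=\tr\Bigl[\tfrac{\Id}{d}(\Id-\cX^{\cA^{\lambda}})\Bigr]=\lambda^{2}\,\cE'(\cA)+\gamma.
\end{equation}

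Next I would compute the difference and reshape it in the same pattern as in the proof of Theorem \ref{th:con-E}:
\begin{align}
\cE'(\cA^{\lambda})-\cE'(\cA)&=\gamma-(1-\lambda^{2})\cE'(\cA)\nonumber\\
&=(1-\lambda)\Bigl[\bigl(1-\tfrac{1}{n_{\cA}}-\cE'(\cA)\bigr)\nonumber\\
&\quad+\lambda\bigl(1-\tfrac{\sum_{i}\|A_{i}\|}{n_{\cA}}-\cE'(\cA)\bigr)\Bigr].
\end{align}
Substituting $\cE'(\cA)=1-\tfrac{1}{d}\tr[\cX^{\cA}]$ from \eqref{eq:def_eprm} immediately identifies the two bracketed quantities as $\Sigma^{\prime\cA}_{1}$ and $-\Sigma^{\prime\cA}_{2}$, yielding the compact form
\begin{equation}
\cE'(\cA^{\lambda})-\cE'(\cA)=(1-\lambda)\bigl[\Sigma^{\prime\cA}_{1}-\lambda\,\Sigma^{\prime\cA}_{2}\bigr].
\end{equation}

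To close the equivalence I would argue exactly as in Theorem \ref{th:con-E}. Lemma \ref{le:eprmb} gives $\cE'(\cA)\leq 1-\tfrac{1}{n_{\cA}}$, hence $\Sigma^{\prime\cA}_{1}\geq 0$ automatically. If $\Sigma^{\prime\cA}_{2}<0$ the bracket is non-negative for every $\lambda\in[0,1]$ and the condition $\Sigma^{\prime\cA}_{1}\geq\Sigma^{\prime\cA}_{2}$ is trivially true; if $\Sigma^{\prime\cA}_{2}\geq 0$ the function $\lambda\mapsto\Sigma^{\prime\cA}_{1}-\lambda\Sigma^{\prime\cA}_{2}$ is non-increasing, so its minimum on $[0,1]$ is attained at $\lambda=1$ and equals $\Sigma^{\prime\cA}_{1}-\Sigma^{\prime\cA}_{2}$. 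Monotonicity for all $\lambda$ is therefore equivalent to $\Sigma^{\prime\cA}_{1}\geq\Sigma^{\prime\cA}_{2}$, which proves both directions.

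There is no genuine obstacle: the only subtlety is the bookkeeping that turns $(1-\tfrac{1}{n_{\cA}}-\cE'(\cA))$ and $(1-\tfrac{\sum_{i}\|A_{i}\|}{n_{\cA}}-\cE'(\cA))$ into the defined quantities $\Sigma^{\prime\cA}_{1}$ and $-\Sigma^{\prime\cA}_{2}$ using $\cE'(\cA)=1-\tfrac{1}{d}\tr[\cX^{\cA}]$. Unlike Theorem \ref{th:con-E}, no eigenvalue identification is needed, because $\cE'$ is a linear functional of $\cX^{\cA}$.
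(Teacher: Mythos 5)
Your proof is correct and follows essentially the same route as the paper's: it reuses equation \eqref{eq:x_gamma}, obtains $\cE^{\prime}(\cA^{\lambda})=\lambda^{2}\cE^{\prime}(\cA)+\gamma$ by linearity of the trace, rewrites the difference as $(1-\lambda)[\Sigma^{\prime\cA}_{1}-\lambda\Sigma^{\prime\cA}_{2}]$, and closes with the same two-case analysis using Lemma \ref{le:eprmb}. No substantive differences to report.
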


\begin{proof}
From equation \eqref{eq:x_gamma}, we get that $(\Id-\cX^{\cA^{\lambda}})=\lambda^2[\Id-\cX^{\cA}]+\gamma\Id$ where $\gamma=\gamma(\cA,\lambda)=\frac{(1-\lambda)}{n_{\cA}}[(n_{\cA}-1)+\lambda(n_{\cA}-\sum_i\|A_i\|)]$. As $A_i\leq\Id$ and therefore, $\sum_i\|A_i\|\leq n_{\cA}$, we have $\gamma\geq 0$.  Therefore, from the equation \eqref{eq:def_eprm}, we get that 

\begin{align}
\cE^{\prime}(\cA^{\lambda})=\lambda^2\cE^{\prime}(\cA)+\gamma.
\end{align}

Therefore,

\begin{align}
\cE^{\prime}(\cA^{\lambda})-\cE^{\prime}(\cA)=&\gamma-(1-\lambda^2)\cE(\cA)\nonumber\\
=&(1-\lambda)[(1-\frac{1}{n_{\cA}}-\cE^{\prime}(\cA))\nonumber\\
&+\lambda(1-\frac{\sum_i\|A_i\|}{n_{\cA}}-\cE^{\prime}(\cA))]\nonumber\\
=&(1-\lambda)[\Sigma^{\prime\cA}_1-\lambda\Sigma^{\prime\cA}_2]\nonumber\\
=&\Sigma^{\prime\cA}(\lambda)
\end{align}
where $\Sigma^{\prime\cA}_1=(1-\frac{1}{n_{\cA}}-\cE^{\prime}(\cA))=(\frac{1}{d}\tr[\cX^{\cA}]-\frac{1}{n})$, $\Sigma^{\prime\cA}_2=(\frac{\sum_i\|A_i\|}{n_{\cA}}-\frac{1}{d}\tr[\cX^{\cA}])$ and $\Sigma^{\prime\cA}(\lambda)=[\Sigma^{\prime\cA}_1-\lambda\Sigma^{\prime\cA}_2]$. Now, since $\cE^{\prime}(\cA)\leq(1-\frac{1}{n_{\cA}})$, $\Sigma^{\prime\cA}_1\geq 0$. Now, There are two following cases-\\
\textbf{(I) For $\Sigma^{\prime\cA}_2<0$ -}\\
In this case, $\cE^{\prime}(\cA^{\lambda})-\cE^{\prime}(\cA)\geq 0$ always. In this case $\Sigma^{\cA}_1\geq \Sigma^{\cA}_2$ trivially holds.\\
\textbf{(II) For $\Sigma^{\prime\cA}_2\geq 0$ -}\\
In this case,  the minimum value of $\Sigma^{\prime\cA}(\lambda)$ (for $\lambda=1$) is $\Sigma^{\prime\cA}_{min}=[\Sigma^{\prime\cA}_1-\Sigma^{\prime\cA}_2]=2\frac{1}{d}\tr[\cX^{\cA}]-\frac{\sum_i\|A_i\|}{n_{\cA}}-\frac{1}{n_{\cA}}$. Clearly, the condition for $\cE^{\prime}(\cA^{\lambda})-\cE^{\prime}(\cA)\geq 0$ for all $\lambda$ is $\Sigma^{\prime\cA}_{min}\geq 0$ or equivalently $\Sigma^{\prime\cA}_1\geq \Sigma^{\prime\cA}_2$.\\

\end{proof}
Since, it is difficult to prove inequality \eqref{eq:con-Eprm}, we prove it for dichotomic qubit observables. Therefore, our next proposition is

\begin{proposition}
For any dichotomic observable $\cW$, $\Sigma^{\cW}_1\geq \Sigma^{\cW}_2$ and therefore, $\cE^{\prime}(\cW^{\lambda})\geq \cE^{\prime}(\cW)$ for all $1 \geq\lambda\geq 0$.\label{th:qubit-di-Eprm}
\end{proposition}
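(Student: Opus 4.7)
The plan is to invoke Theorem \ref{th:con-Eprm}, which (modulo the apparent typo $\Sigma^{\cW}_i$ versus $\Sigma^{\prime\cW}_i$ in the statement) reduces the claim to verifying the single scalar inequality $\Sigma^{\prime\cW}_1\geq\Sigma^{\prime\cW}_2$ for every dichotomic qubit observable $\cW=\{W_1,W_2\}$. I would reuse the parametrisation already set up in the proof of Proposition \ref{th:qubit-di-E}: diagonalise $W_1 = \omega_1\ket{\omega_1}\bra{\omega_1}+\omega_2\ket{\omega_2}\bra{\omega_2}$ with $\omega_1\geq\omega_2$, so that $W_2=\Id-W_1$, $\|W_1\|=\omega_1$, $\|W_2\|=1-\omega_2$, and equation \eqref{eq:wx} delivers $\cX^{\cW} = \omega^{\prime}_1\ket{\omega_1}\bra{\omega_1}+\omega^{\prime}_2\ket{\omega_2}\bra{\omega_2}$ with $\omega^{\prime}_1,\omega^{\prime}_2$ exactly as defined there.

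A useful simplification relative to the proof of Proposition \ref{th:qubit-di-E} is that the primed condition involves only $\tfrac{1}{d}\tr[\cX^{\cW}]$ instead of the minimum eigenvalue $x^{\cW}_{min}$. Because the trace is symmetric in the two eigenvalues of $\cX^{\cW}$, there is no longer any need to split into the sub-cases $\omega_1+\omega_2\geq 1$ and $\omega_1+\omega_2<1$. I would simply compute $\tr[\cX^{\cW}]=\omega^{\prime}_1+\omega^{\prime}_2$ and substitute into
\begin{equation*}
\Sigma^{\prime\cW}_1-\Sigma^{\prime\cW}_2 \;=\; \tfrac{2}{d}\tr[\cX^{\cW}] - \tfrac{\|W_1\|+\|W_2\|}{n_{\cW}} - \tfrac{1}{n_{\cW}},
\end{equation*}
with $d=n_{\cW}=2$, to obtain a single quadratic polynomial in $(\omega_1,\omega_2)$ over the triangle $0\leq\omega_2\leq\omega_1\leq 1$.

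The remaining step is to show that this polynomial is non-negative on that triangle. The natural strategy is to introduce the symmetric variable $s=\omega_1+\omega_2$, in terms of which the polynomial rearranges into an expression of the form $s^2 - \tfrac{3}{2}s - \omega_2 + 1$. Using the ordering constraint $\omega_2\leq s/2$ to bound the residual linear term, one expects the expression to collapse to $(s-1)^2\geq 0$, with equality precisely at the fully mixed observable $\omega_1=\omega_2=\tfrac{1}{2}$ where $\cE^{\prime}$ already saturates the upper bound of Lemma \ref{le:eprmb}. This algebraic route obviates the graphical verification (Fig.~\ref{aa}) used for Proposition \ref{th:qubit-di-E}. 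The monotonicity conclusion $\cE^{\prime}(\cW^{\lambda})\geq\cE^{\prime}(\cW)$ then follows immediately from Theorem \ref{th:con-Eprm}.

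The main obstacle is not the final inequality itself, which is tight and plausible, but rather the careful bookkeeping around the ordering hypothesis: without $\omega_1\geq\omega_2$ the polynomial can be negative, so the symmetry-breaking assumption must be deployed at exactly the right step when bounding $-\omega_2$ from below. Once that is done, the argument is strictly simpler than the one for Proposition \ref{th:qubit-di-E} because the absence of the two sub-cases removes the need for any numerical or graphical verification.
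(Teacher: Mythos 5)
Your proposal is correct, and it follows the paper's own proof of Proposition \ref{th:qubit-di-Eprm} up to the final verification: the same appeal to Theorem \ref{th:con-Eprm} (with the same observation that the stated $\Sigma^{\cW}_i$ should read $\Sigma^{\prime\cW}_i$), the same diagonalisation $W_1=\omega_1\ket{\omega_1}\bra{\omega_1}+\omega_2\ket{\omega_2}\bra{\omega_2}$ with $\omega_1\geq\omega_2$, the same $\cX^{\cW}$ from equation \eqref{eq:wx}, and the same remark that only $\tr[\cX^{\cW}]$ rather than $x^{\cW}_{min}$ enters, so the two sub-cases of Proposition \ref{th:qubit-di-E} are unnecessary. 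Where you genuinely depart from the paper is the last step. The paper stops at $\Sigma^{\prime\cW}_{min}=\omega^2_1+\omega_1\omega_2+(1-\omega_2)(2-\omega_1-\omega_2)-1-\tfrac{1}{2}(\omega_1-\omega_2)$ and checks non-negativity graphically via Fig.~\ref{plot:wprm}; you instead expand this to $(\omega_1+\omega_2)^2-\tfrac{3}{2}\omega_1-\tfrac{5}{2}\omega_2+1=s^2-\tfrac{3}{2}s-\omega_2+1$ with $s=\omega_1+\omega_2$, and use $\omega_2\leq s/2$ (equivalent to the ordering $\omega_2\leq\omega_1$) to conclude
\begin{equation*}
\Sigma^{\prime\cW}_{min}\;\geq\; s^2-\tfrac{3}{2}s-\tfrac{s}{2}+1\;=\;(s-1)^2\;\geq\;0 .
\end{equation*}
I have checked this algebra and it is correct. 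Your route buys a genuine analytic proof in place of a numerical plot, and it additionally identifies the equality case $\omega_1=\omega_2=\tfrac{1}{2}$, i.e.\ the trivial observable $\cT^2$, consistent with saturation of Lemma \ref{le:eprmb}. Your caution about where the ordering hypothesis is consumed is well placed: dropping $\omega_1\geq\omega_2$ (e.g.\ $\omega_1=0$, $\omega_2=1$, so $s=1$) makes the polynomial equal $-\tfrac{1}{2}$, but that configuration is excluded by the convention under which $\|W_1\|=\omega_1$ and $\|W_2\|=1-\omega_2$ were derived.
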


\begin{proof}
Suppose $\cW=\{W_1, W_2\}$ are two qubit dichotomic observables. Clearly $W_2=\Id-W_1$. Let $W_1=\omega_1\ket{\omega_1}\bra{\omega_1}+\omega_2\ket{\omega_2}\bra{\omega_2}$. Without the loss of generality, we can choose $\omega_1\geq \omega_2$. Then $\|W_1\|=\omega_1$. Now $W_2=(1-\omega_1)\ket{\omega_1}\bra{\omega_1}+(1-\omega_2)\ket{\omega_2}\bra{\omega_2}$. Clearly, $\|W_2\|=(1-\omega_2)$. Therefore, from equation \eqref{eq:wx}, we get that
\begin{align}
\cX^{\cW}=&\omega^{\prime}_1\ket{\omega_1}\bra{\omega_1}+\omega^{\prime}_2\ket{\omega_2}\bra{\omega_2}\label{eq:wxprm}
\end{align}
where $\omega^{\prime}_1=[\omega^2_1+(1-\omega_1)(1-\omega_2)]$ and $\omega^{\prime}_2=[\omega_1\omega_2+(1-\omega_2)^2]$. Therefore,
\begin{align}
\Sigma^{\prime \cW}_{min}&=2(\frac{1}{2}\tr[\cX^{\cW}])-\frac{1}{2}-\frac{\|W_1\|+\|W_2\|}{2}\nonumber\\
&=\omega^{\prime}_1+\omega^{\prime}_2-\frac{1}{2}-\frac{\omega_1+(1-\omega_2)}{2}\nonumber\\
&=\omega^2_1+\omega_1\omega_2+(1-\omega_2)(2-\omega_1-\omega_2)-1-\frac{\omega_1-\omega_2}{2}
\end{align}

\begin{figure}[hbt!]
\includegraphics[width=7.9cm,height=8.1cm]{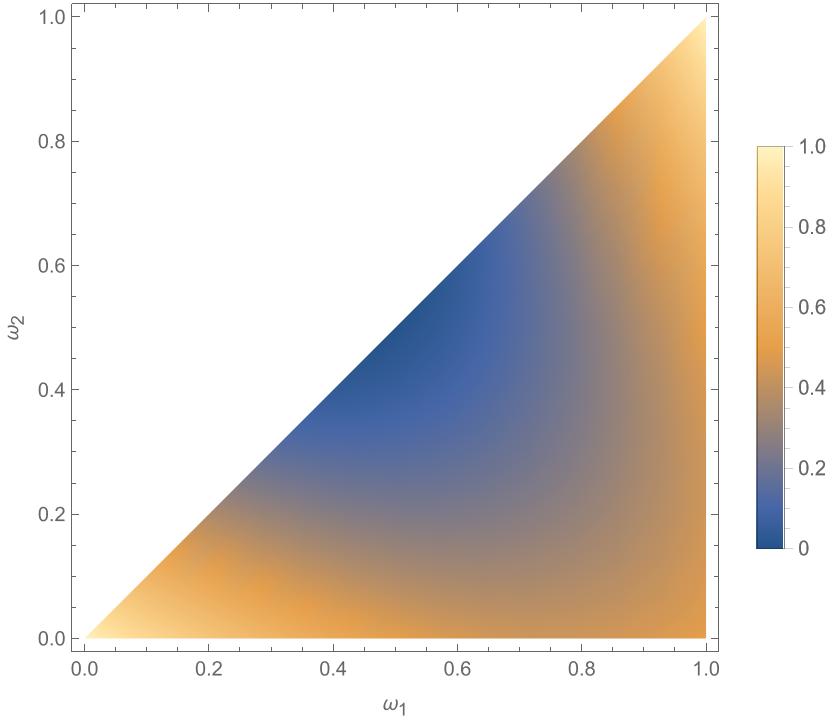}
\caption{Plot of $\Sigma^{\prime\cW}_{min}$ w.r.t. $\omega_1$ and $\omega_2$ for $\omega_1\geq\omega_2$. This plot show that $\Sigma^{\prime\cW}_{min}\geq 0$ always. }\label{plot:wprm}
\end{figure}
Figure \ref{plot:wprm}, says that  $\Sigma^{\prime\cW}_{min}\geq 0$ for $\omega_1\geq\omega_2$. Hence, $\cE^{\prime}(\cW^{\lambda})\geq \cE^{\prime}(\cW)$ for all $1 \geq\lambda\geq 0$.
\end{proof}

Therefore, inequality \eqref{eq:con-E} and inequality \eqref{eq:con-Eprm} hold for qubit dichotomic observables. We have searched for examples for which inequality \eqref{eq:con-E} inequality \eqref{eq:con-Eprm} do not hold. But we could not find any such example. Noting these facts, we provide the following conjecture-

\begin{conjecture}\label{co:emf}
For any qubit observable $\cA$, inequality $\Sigma^{\cA}_1\geq \Sigma^{\cA}_2$ and inequality $\Sigma^{\prime \cA}_1\geq \Sigma^{\prime \cA}_2$ hold  and therefore, $\cE(\cA^{\lambda})\geq \cE(\cA)$ and $\cE^{\prime}(\cA^{\lambda})\geq \cE^{\prime}(\cA)$ for all $1\geq\lambda\geq 0$.
\end{conjecture}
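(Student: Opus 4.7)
The plan is to reformulate both required inequalities in the Bloch representation. Writing each qubit effect as $A_i = r_i\Id + \vec{a}_i\cdot\vec{\sigma}$ with $r_i\geq|\vec{a}_i|$ and $r_i+|\vec{a}_i|\leq 1$, the POVM condition $\sum_i A_i = \Id$ yields $\sum_i r_i = 1$ and $\sum_i\vec{a}_i = \vec{0}$, while $\|A_i\| = r_i + |\vec{a}_i|$ gives $\sum_i\|A_i\| = 1 + \sum_i|\vec{a}_i|$. A direct computation yields
\begin{equation}
\cX^{\cA} = S\,\Id + \vec{V}\cdot\vec{\sigma},\ \ S=\sum_i (r_i^2 + r_i|\vec{a}_i|),\ \ \vec{V}=\sum_i (r_i+|\vec{a}_i|)\vec{a}_i,
\end{equation}
so $\tr[\cX^{\cA}]=2S$ and $x^{\cA}_{min} = S-|\vec{V}|$. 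Substituting into Theorems \ref{th:con-E} and \ref{th:con-Eprm} reduces the two required conditions to
\begin{equation}
2n_\cA S \geq 2 + \sum_i|\vec{a}_i|\quad\text{and}\quad 2n_\cA(S-|\vec{V}|) \geq 2 + \sum_i|\vec{a}_i|.
\end{equation}
Because $|\vec{V}|\geq 0$, the $\cE$ inequality implies the $\cE^{\prime}$ inequality, so in principle it suffices to prove the former, but I would still tackle the $\cE^{\prime}$ case first since it is strictly easier.

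For the $\Sigma^{\prime\cA}$ case, my plan is to rewrite the difference as
\begin{equation}
2n_\cA S - 2 - \sum_i|\vec{a}_i| = 2n_\cA\sum_i\!\left(r_i-\tfrac{1}{n_\cA}\right)^2 + \sum_i (2n_\cA r_i - 1)|\vec{a}_i|,
\end{equation}
which uses only $\sum_i r_i = 1$. The first term is the Cauchy-Schwarz slack and is always nonnegative. Negative contributions to the second term come only from indices with $r_i<1/(2n_\cA)$, for which $|\vec{a}_i|\leq r_i < 1/(2n_\cA)$, so each such contribution is bounded below by $(2n_\cA r_i-1)r_i$. A case split on which indices lie below the threshold $1/(2n_\cA)$, combined with a convexity estimate for $\sum_i(r_i-1/n_\cA)^2$ restricted to the remaining indices, should show that the Cauchy-Schwarz slack dominates this controlled deficit.

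The harder task is the $\Sigma^{\cA}$ inequality because $|\vec{V}|$ depends on the orientations of the Bloch vectors, not only on their lengths. The opening move I would try is the triangle inequality $|\vec{V}|\leq\sum_i\|A_i\|\,|\vec{a}_i|$, which gives the clean lower bound $x^{\cA}_{min}\geq\sum_i(r_i^2-|\vec{a}_i|^2)=\sum_i\det(A_i)$; this appears to be insufficient on its own. To tighten it, I would exploit $\sum_i\vec{a}_i=\vec{0}$ to rewrite $\vec{V}=\sum_i(\|A_i\|-c)\vec{a}_i$ for any constant $c$ and optimize $c$ (for example, as a weighted mean of the $\|A_i\|$) to shrink the coefficients before applying the triangle inequality. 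A Lagrange-multiplier argument applied to a hypothetical minimizer of the gap should force many $A_i$ to be rank-one with their Bloch vectors collinear, reducing the problem to a low-dimensional polynomial inequality checkable in the style of Propositions \ref{th:qubit-di-E} and \ref{th:qubit-di-Eprm}.

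The main obstacle I anticipate is exactly this last step: even after using $\sum_i\vec{a}_i=\vec{0}$, one must rule out worst-case configurations in which the $\vec{a}_i$ are arranged so that $|\vec{V}|$ becomes comparable to $S$, which squeezes $x^{\cA}_{min}$. A natural fallback is induction on $n_\cA$: replacing two effects $A_i,A_j$ by $A_i+A_j$ produces a qubit observable with $n_\cA-1$ outcomes, and tracking how $S$, $|\vec{V}|$, $\sum_i|\vec{a}_i|$, and the bound $1-1/n_\cA$ transform under such mergers, with the dichotomic case of Propositions \ref{th:qubit-di-E} and \ref{th:qubit-di-Eprm} as the base, may either close the conjecture or precisely localize the obstruction.
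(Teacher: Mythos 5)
First, be aware that the paper does not prove this statement: it is posed and left as an open conjecture, supported only by Propositions \ref{th:qubit-di-E} and \ref{th:qubit-di-Eprm} for dichotomic qubit observables (and even those rest on numerical plots rather than analytic estimates). So your proposal is being measured against an open problem, not against an existing proof.

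Your Bloch-sphere reduction is correct: writing $A_i=r_i\Id+\vec a_i\cdot\vec\sigma$, the conditions of Theorems \ref{th:con-E} and \ref{th:con-Eprm} do become $2n_{\cA}(S-|\vec V|)\geq 2+\sum_i|\vec a_i|$ and $2n_{\cA}S\geq 2+\sum_i|\vec a_i|$ respectively, and your algebraic identity for the $\Sigma^{\prime\cA}$ case checks out. In fact that half closes more easily than you suggest: the quantity $2n_{\cA}(r_i-1/n_{\cA})^2+(2n_{\cA}r_i-1)|\vec a_i|$ is nonnegative index by index, since for $r_i\geq 1/(2n_{\cA})$ both terms are nonnegative, while for $r_i<1/(2n_{\cA})$ the worst case $|\vec a_i|=r_i$ gives $4n_{\cA}r_i^2-5r_i+2/n_{\cA}$, a quadratic in $r_i$ with discriminant $25-32<0$; no case split or convexity estimate over the remaining indices is needed. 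The genuine gap is the $\Sigma^{\cA}$ inequality, and there your proposal is a program rather than a proof: the steps ``optimize the shift $c$,'' ``a Lagrange-multiplier argument should force many $A_i$ to be rank one,'' and the fallback induction on mergers are all left unexecuted, and you concede that the worst-case orientations of the $\vec a_i$ are not ruled out. The difficulty is real: for $\cA=\{\tfrac12(\Id+\sigma_z),\tfrac14(\Id-\sigma_z),\tfrac14(\Id-\sigma_z)\}$ one finds $S=3/4$, $|\vec V|=1/4$, $\sum_i|\vec a_i|=1$, so $2n_{\cA}(S-|\vec V|)=3=2+\sum_i|\vec a_i|$ holds with exact equality, while your triangle-inequality bound $x^{\cA}_{\min}\geq\sum_i\det(A_i)$ yields $0$ where $1/2$ is required. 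Any successful argument must therefore be tight on such non-dichotomic configurations, and nothing in the proposal yet achieves this. As it stands, you have established the $\cE^{\prime}$ half of the conjecture and reduced the $\cE$ half to an explicit, still-open vector inequality.
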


If Conjecture \ref{co:emf} can be proven then two corollaries similar to Corollary \ref{coro:el_mf} and Corollary \ref{coro:elprm_mf} can also be proven which establishes the monotonicity of $\cE$ and $\cE^{\prime}$ under the addition of white noise.

\section{Experimental Determination of the value of the unsharpness measures}\label{sec:ex}
Here we show that experimentally, one can determine the value of $\cE^L(\cA)$ and $\cE^{\prime L}(\cA)$ for an unknown qubit observable $\cA=\{A_i\}$. We show this for the qubit case. Generalization for the higher dimensions is straightforward.\\
Let $E$-matrix of an unknown qubit observable $\cA=\{A_i\}$ be 
$E^{\cA}=\begin{bmatrix}
a&c^*\\
c&d
\end{bmatrix}$ where this matrix is written in $\sigma_z$ basis. Suppose $\ket{\pm,i}$ are the eigen states of $\sigma_i$ corresponding to the eigen values $\pm 1$ for all $i\in\{x,y,z\}$. Suppose we have $n_{i,\pm}$ copies of such states are available to us. On each of these copies, $\cA$ has been measured twice successively using Luder's instrument. Suppose that for $f_{i,\pm}$ copies outcomes have repeated (i.e., the outcome of the first $\cA$ meaurement and the outcome of the second $\cA$ meaurement are same)  Then from equation \eqref{eq:L-out-rep-E}, we get that the average probability that any outcome will repeat, is

\begin{align}
\cP^L(\ket{\pm,i}\bra{\pm,i},\cA)=\tr[\ket{\pm,i}\bra{\pm,i}E^{\cA}].
\end{align}
Now we know that for large $n_{i,\pm}$, $\cP^L(\ket{\pm,i}\bra{\pm,i},\cA)\approx\frac{f_{i,\pm}}{n_{i,\pm}}$. Now $\cP^L(\ket{+,z}\bra{+,z},\cA)=a\approx\frac{f_{z,+}}{n_{z,+}}$, $\cP^L(\ket{-,z}\bra{-,z},\cA)=b\approx\frac{f_{z,-}}{n_{z,-}}$, $\cP^L(\ket{\pm,x}\bra{\pm,x},\cA)=\frac{a+b\pm2Re(c)}{2}\approx\frac{f_{x,\pm}}{n_{x,\pm}}$, $\cP^L(\ket{\pm,y}\bra{\pm,y},\cA)=\frac{a+b\pm2Im(c)}{2}\approx\frac{f_{y,\pm}}{n_{y,\pm}}$ where $Re(c)$ is the real part of $c$ and $Im(c)$ is the imaginary part of $c$. From these approximate equalities, we get the following set of approximate equalities-
\begin{align}
a\approx&\frac{f_{z,+}}{n_{z,+}};
~b\approx\frac{f_{z,-}}{n_{z,-}}\\
c\approx&(\frac{f_{x,+}}{2n_{x,+}}-\frac{f_{z,+}}{4n_{z,+}}-\frac{f_{z,-}}{4n_{z,-}})\nonumber\\
&+i(\frac{f_{y,+}}{2n_{y,+}}-\frac{f_{z,+}}{4n_{z,+}}-\frac{f_{z,-}}{4n_{z,-}}).
\end{align} 
Clearly for $n_{i\pm}\rightarrow\infty$, for all $i\in\{x,y,z\}$, above approximate equalities become exact equalities. In this way, if $a$, $b$ and $c$ are known approximately then $E^{\cA}$ is known approximately. The lowest eigenvalue of $E^{\cA}$ is $\frac{a+b-\sqrt{(a+b)^2-4(ab-|c|^2)}}{2}$. Therefore, $\cE^L(\cA)=\|\Id-E^{\cA}\|=1-\frac{a+b-\sqrt{(a+b)^2-4(ab-|c|^2)}}{2}$. Similarly, $\cE^{\prime L}(\cA)=1-\frac{1}{2}\tr[E^{\cA}]=1-\frac{a+b}{2}$.  Therefore, in this way, it is possible to determine the values of $\cE^L(\cA)$ and $\cE^{\prime L}(\cA)$ experimentally.

The experimental determination of the values of $\cE(\cA)$ and $\cE^{\prime}(\cA)$ is similar as above.
\section{An attempt to construct the resource theory of the sharpness of the observables }\label{sec:resource}
Quantification of quantum resources and the construction of the resource theory is very important and interesting direction of research \cite{Chitambar}. Few examples of different resource theories are (i) the resource theory of entanglement \cite{Chitambar,Shahandeh}, (ii) the resource theory of coherence \cite{Baumgratz, Andreas-Winter}, (iii) the resource theory of incompatibility \cite{Buscemi}, (iv) the resource theory of quantum channels \cite{Liu-re-cha}, (v) the resource theory of quantum thermodynamics \cite{Adesso-book, Goold-review} etc. We do not claim we construct the complete resource theory here. But we present the idea of the resource theory of the sharpness of the observables here. We take the sharpness of the observables as a resource here. We first provide the following reasons behind taking sharpness of the observables as a resource-
\begin{enumerate}
\item The Ref. \cite{Huber-proj} suggests that an ideal PVM have infinite resource costs. Therefore, with finite amount of resource, a PVM can not be performed with arbitrary accuracy. Therefore, this fact suggests that the ability to perform PVMs (i.e., sharp measurements) or equivalently sharpness of the observables itself can be considered as a resource.
\item In practice, it is very difficult to get rid of the interaction between the system and the environment. The interaction between the system and the environment disturbs the quantum state of a system or equivalently one can say that due to the interaction between the system and the environment, an effective channel $\Lambda$ acts on the system state. In Heisenberg picture, this channel acts on the observable $\cA$, which we want to measure, as $\Lambda^*(\cA)=\{\Lambda^*(A_i)\}$. Depending on the type of the interaction $\Lambda^*$ can convert a sharp observable into an unsharp observable. For an example- if $\Lambda=\Gamma^t_d$ is depolarising channel i.e., $\Lambda(\rho)=\Gamma^t_d(\rho)=t\rho+(1-t)\frac{\Id}{d}$ and $\cA=\{\ket{a_i}\bra{a_i}\}$ is a rank one PVM, then $\Lambda^*(\cA)=\Gamma^{t*}_d(\cA)=\{\Gamma^{t*}_d(A_i)=tA_i+(1-t)\frac{\Id}{d}\}$. Therefore, for a given value of $t<1$, it is impossible to perform a PVM accurately. Therefore, given the type of interaction, it may not be possible to perform a PVM with arbitrary accuracy. Therefore, to perform a PVM in a lab, one needs to make proper arrangements in the lab to get rid of such interactions between the system and the environment which prevents one to perform the desired PVM with arbitrary accuracy. Therefore, this fact also suggests that the ability to perform PVMs (i.e., sharp measurements) or equivalently sharpness of the observables itself can be considered as a resource.
\item There exist several information-theoretic tasks which can not be performed perfectly without the sharp observables. For example- a set of orthogonal states can be distinguished \emph{perfectly} only with certain PVMs. Therefore, this fact also suggests that the ability to perform PVMs (i.e., sharp measurements) or equivalently sharpness of the observables itself can be considered as a resource.
\end{enumerate}
Now we state the different elements of the resource theory of the sharpness of the observables below-

\begin{enumerate}
\item \emph{The resource-} The sharpness of the observables.
\item \emph{The free operation-} The fuzzifying processes. For example- a class of fuzzifying processes is the addition of white noise.
\item \emph{The resource measure-} We know that the unsharpness is opposite to the sharpness. Therefore, as sharpness is monotonically non-increasing under fuzzifying processes, the unsharpness is monotonically non-decreasing under fuzzifying processes. Since, from Theorem \ref{th:el_mf} and  Corollary \ref{coro:el_mf}, we get that $\cE^L$ is monotonically non-decreasing under the addition of white noise, $\cE^L$ can be a possible measure of unsharpness. The higher value of $\cE^L$ corresponds to less  sharpness (i.e., less resource). Similarly, from Theorem \ref{th:elprm_mf} and Corollary \ref{coro:elprm_mf}, we get that $\cE^{\prime L}$ can be a possible measure of unsharpness. It is to be noted that if the Conjecture \ref{co:emf} can be proven then $\cE$ and $\cE^{\prime}$ also can be an unsharpness measure for qubit observables consistent with the resource-theoretic framework.
\item \emph{Most resourceful measurements-} The sharp measurements (PVMs).

\item \emph{Free measurements-} Given the number of outcomes $n$,  the observable $\cT^{n}=\{I^{n}_i=\frac{\Id}{n}\}^{n}_{i=1}$ is a free measurement (most unsharp).

\item \emph{Example of an information-theoretic task which requires the resource-} Sharp measurements are required in the \emph{perfect} discrimination of the orthogonal states.
\end{enumerate}
Now a complete resource theory can be constructed only if all the fuzzifying processes are specified which is out of the scope of the present work. One point should be mentioned that the above-said resource theory is completely different the resource theory of quantum uncomplexity which is presented in the Ref. \cite{Faist} and the fuzzy operations which are discussed in the Ref. \cite{Faist} is quite different than our idea of fuzzifying processes.
\section{Conclusion}\label{sec:co}
In this work, at first, we have constructed two Luder's instrument-based unsharpness measures and provided the tight upper bounds of those measures. Then we have proved the monotonicity of the above-said measures under a class of fuzzifying processes (i.e., the addition of white noise). This is consistent with the resource-theoretic framework. We have also discussed the fact that these measures does not change if a unitary is acted on the observables in the Heisenberg picture. Then we have related our approach to the approach of the Ref. \cite{Luo-u}. Next, we have tried to construct tried instrument-independent unsharpness measures. In particular, we have defined two instrument-independent unsharpness measures and provided the tight upper bounds of those measures and then we have derived the condition for the monotonicity of those measures under a class of fuzzifying processes and proved the monotonicity for dichotomic qubit observables.  Then  we have shown that for an unknown measurement, the values of all of these measures can be determined experimentally. Finally, we have presented the idea of the resource theory of the sharpness of the observables.

It would be interesting to prove Conjecture \ref{co:emf} in the future. It would be also interesting to construct a complete resource theory of the sharpness of the observables in the future.

\section{Acknowledgements}
I would like to thank my advisor Prof. S. Ghosh for his valuable
comments on this work.

\end{document}